\documentclass[10pt,twocolumn,twoside]{IEEEtran}
\IEEEoverridecommandlockouts 

\usepackage{cite}
\usepackage{amsmath,amssymb,amsfonts,amsthm}
\usepackage{algorithmic}
\usepackage{graphicx}
\usepackage{textcomp}
\usepackage{balance}
\usepackage{xcolor}
\usepackage[utf8]{inputenc}
\usepackage[normalem]{ulem}
\usepackage{enumerate}
\usepackage{subcaption}
\usepackage{booktabs} 
\usepackage{multirow} 
\usepackage{mathtools}
\mathtoolsset{showonlyrefs=true}
\usepackage{tikz}
\usetikzlibrary{shapes.geometric}

\newcommand{\filledstar}{%
  \tikz[baseline=-0.55ex]
  \node[
    star,
    star points=5,
    star point ratio=2.25,
    fill=black,
    draw=black,
    inner sep=0pt,
    minimum size=0.9em
  ] {};
}

\newcommand{\hollowstar}{%
  \tikz[baseline=-0.55ex]
  \node[
    star,
    star points=5,
    star point ratio=2.25,
    draw=black,
    fill=white,
    inner sep=0pt,
    minimum size=0.9em
  ] {};
}

\def\BibTeX{{\rm B\kern-.05em{\sc i\kern-.025em b}\kern-.08em
    T\kern-.1667em\lower.7ex\hbox{E}\kern-.125emX}}
\usepackage{lipsum}

\newtheorem{theorem}{Theorem}
\newtheorem{lemma}{Lemma}

\newtheorem{definition}{Definition}
\newtheorem{remark}{Remark}
\newcommand{\rac}[1]{\textcolor{black}{#1}}

\newcommand{\diag}{\mathrm{diag}}

\title{\LARGE \bf 
Decentralized Stability Certificates in IBR-Dominated Grids: \\The Role of the Network State
} 

\author{Zhimeng Wang$^\dagger$, Sushobhan Chatterjee$^\dagger$, Sijia Geng$^\dagger$, Richard Pates$^\ddagger$, Enrique Mallada$^\dagger$
\vspace{-0.5cm}
\thanks{ 
$^\dagger$Zhimeng Wang, Sushobhan Chatterjee, Sijia Geng, Enrique Mallada are with the Department of Electrical and Computer Engineering, Johns Hopkins University, Baltimore, MD, USA. Email: {\tt \{zwang471, schatt21, sgeng, mallada\}@jhu.edu}}
\thanks{$^\ddagger$Richard Pates is with the Department of Automatic Control, Lund University, Lund, Sweden. Email: {\tt \{richard.pates\}@control.lth.se}
}
\thanks{This work was supported by the NSF Global Centers program under Grant No.~2330450 and by the DOE Office of Science (ASCR) under Award No.~826565.}
}

\usepackage{ifthen}
\newboolean{showcomments}
\setboolean{showcomments}{true}
\usepackage{color}
\usepackage{todonotes}

\definecolor{bleudefrance}{rgb}{0.19, 0.55, 0.91}
\definecolor{ao(english)}{rgb}{0.0, 0.5, 0.0}

\newcommand{\addcite}[0]{\ifthenelse{\boolean{showcomments}}
{\textcolor{purple}{(add cite(s)) }}{}}%

\newcommand{\addref}[0]{\ifthenelse{\boolean{showcomments}}
{\textcolor{purple}{(add ref) }}{}}%

\newcommand{\enrique}[1]{  \ifthenelse{\boolean{showcomments}}
{\todo[inline,caption={},color=bleudefrance]{Enrique: #1}}{}}
\newcommand{\rene}[1]{  \ifthenelse{\boolean{showcomments}}
{\todo[inline,caption={},color=cyan]{Ren\'e: #1}}{}}
\newcommand{\emmargin}[1]{\ifthenelse{\boolean{showcomments}}{\marginpar{\color{bleudefrance}\tiny EM: #1}}{}}
\newcommand{\hancheng}[1]{  \ifthenelse{\boolean{showcomments}}
{\todo[inline,color=orange]{Hancheng: #1}}{}}
\newcommand{\ziqing}[1]{  \ifthenelse{\boolean{showcomments}}
{\todo[inline,color=red]{Ziqing: #1}}{}}
\newcommand{\salma}[1]{  \ifthenelse{\boolean{showcomments}}
{\todo[inline,color=yellow]{Salma: #1}}{}}
\newcommand{\zhimeng}[1]{  \ifthenelse{\boolean{showcomments}}
{\todo[inline,color=teal]{Zhimeng: #1}}{}}
\newcommand{\zxmargin}[1]{\ifthenelse{\boolean{showcomments}}{\marginpar{\color{purple}\tiny ZX: #1}}{}}
\newcommand{\stmargin}[1]{\ifthenelse{\boolean{showcomments}}{\marginpar{\color{red}\tiny ST: #1}}{}}
\newcommand{\sushobhan}[1]{  \ifthenelse{\boolean{showcomments}}
{\todo[inline,color=pink]{Sushobhan: #1}}{}}

\newcommand{\hl}[1]{\ifthenelse{\boolean{showcomments}}
{\textcolor{red}{#1}}{#1}}

\newboolean{showedits}
\setboolean{showedits}{true}
\usepackage[markup=underlined]{changes}
\definechangesauthor[color=red]{EM}
\newcommand{\aem}[1]{
\ifthenelse{\boolean{showedits}}
{\added[id=EM]{#1}}
{\!#1\hspace{-4.75pt}}
}
\newcommand{\repem}[2]{
\ifthenelse{\boolean{showedits}}
{\replaced[id=EM]{#1}{#2}}
{\!#1\hspace{-4.75pt}}
}
\newcommand{\dem}[1]{
\ifthenelse{\boolean{showedits}}
{\deleted[id=EM]{#1}}
{}
}

\begin{document}
\begingroup
\allowdisplaybreaks

\maketitle

\begin{abstract}
Small-signal instabilities, such as unforced sub-synchronous oscillations (SSOs), are increasingly observed in inverter-based resource (IBR) dominated grids.
While decentralized stability certificates offer a scalable means to avoid instability onset, they are typically derived under restrictive network-state assumptions--such as small angle differences or negligible voltage drops--that cannot capture how departures from these conditions affect system stability.
In this paper, we develop a network model and a decentralized analysis framework that explicitly characterizes how reactive power mismatches, line loading, and inverter control parameters jointly determine small-signal stability.
We show that increased steady-state reactive power mismatches and line loading lead to more stringent conditions on admissible inverter droop gains.
These results make decentralized stability certificates explicitly network-state dependent, showing how network stress shrinks the set of stabilizing local controller parameters.
\end{abstract}

\begin{IEEEkeywords}
Decentralized stability, inverter-based resources, passivity, reactive power, small-signal stability.
\end{IEEEkeywords}

\section{Introduction}

Small-signal instabilities are becoming increasingly prevalent in inverter-based resource (IBR) dominated grids, often manifesting as unforced sub-synchronous oscillations (SSOs) \cite{fan2022real}. Such instabilities have been reported across a wide range of scenarios, including weak-grid conditions, controller misconfigurations at commissioning, and post-contingency network reconfigurations \cite{li2019wind,damas2020subsynchronous}. 
While some events can be explained by classical resonance phenomena or control-structure interactions \cite{shair2019overview}, the precise mechanisms driving many of these oscillations remain poorly understood. In particular, instabilities often arise spontaneously, without a clear disturbance or topological trigger, underscoring the need for improved analytical tools that capture how the underlying network state shapes small-signal dynamics and stability-margins~\cite{cheng2022real}.

Recent research has emphasized the use of \emph{decentralized certificates} to ensure small signal stability~\cite{pates2019robust,vorobev2019decentralized,watson2020control,siahaan2024decentralized,huang2024gain,haberle2025decentralized}. These approaches derive scale-free conditions that guarantee stability under bounded uncertainty, offering scalable alternatives to full electromagnetic transient (EMT) simulations and global eigenvalue analyses\cite{pates2019robust}. To obtain such tractable, decentralized conditions, most formulations either adopt simplifying network steady-state assumptions--such as small phase differences~\cite{vorobev2019decentralized,haberle2025decentralized}, near-uniform voltage magnitudes~\cite{watson2020control}, or active–reactive decoupling~\cite{pates2019robust,siahaan2024decentralized}--or implicitly encode network-state dependence within the local linearization of inverter models~\cite{huang2024gain}. As a result, the dependence of stability margins on the network state is either neglected, implicitly confined to a nominal regime, or absorbed into model abstractions.

In this paper, we develop a network model and accompanying analysis framework that derives decentralized stability certificates without imposing explicit or implicit assumptions on the network state. \textcolor{black}{The port-based representation and loop-transformation perspective used in our framework are inspired by the scalable decentralized stability framework in \cite{pates2014scalable}, which develops local stability protocols for heterogeneous networks. }Our approach relies on a small-signal model in which lines and IBRs are represented as multi-port components, interconnected through an incidence-matrix-based network structure. In particular, each transmission line is modeled as a quasi-stationary multi-port subsystem mapping small deviations of nodal angles and voltages to directional real and reactive power flows, while inverter dynamics are represented through filtered droop gains. This explicit port-based decomposition makes the passivity properties of individual components--and potential violations thereof--transparent, enabling a systematic and decentralized application of loop transformation techniques to compensate for non-passive effects. 

Applying this framework reveals an intrinsic coupling between the network steady state and the admissible control gains required for guaranteeing small-signal stability. In particular, our results show that increasing reactive power asymmetries ($Q_{ij}\!\!-\!Q_{ji}$) or line loading--captured through larger phase differences ($\theta_{ij}\!=\!\theta_i\!-\!\theta_j$)--progressively restrict the set of stabilizing voltage droop gains. These findings align with the well-established notion that small-signal stability depends on the operating point of the network, and indicate that decentralized certificates derived under nominal or lightly stressed conditions can be overly optimistic when applied to stressed operating regimes. Moreover, the derived conditions provide practical guidance for preserving stability by indicating when network-aware retuning or redispatch may be required.

\textcolor{black}{Our certificate also relates to conditions that exploit full network information, such as those derived for lossless droop-controlled microgrids in~\cite{schiffer2014conditions} and for dispatchable virtual oscillator control with transmission-line dynamics in~\cite{gross2019effect}. Naturally, a decentralized certificate can only provide a sufficient version of such centralized conditions, trading network-wide quantities for line-wise, locally verifiable tests; ours nonetheless retains a key feature of~\cite{schiffer2014conditions}, namely that only the voltage droop gains are constrained while the frequency droop gains and their filter time constants remain free (Remark~\ref{rem:kp}). Closest to our work is the earlier, independently derived certificate of~\cite{niehues2025small}, which gives bus-local small-signal conditions for heterogeneous devices with a voltage--reactive-power droop. While similar in spirit, our certificate yields a different condition, in which line loading and the directional reactive power flow $Q_{ij}\! -\!\,Q_{ji}$ appear explicitly and the voltage-droop budget of each bus is allocated across its lines. A systematic comparison is left for future work.} 


The rest of the paper is organized as follows. Section~\ref{sec:preliminaries} reviews the preliminary tools used throughout the paper, including the passivity stability theorem and loop transformation techniques for analyzing feedback interconnections. Section~\ref{sec:network_model} introduces the network and device models and describes their interconnection through a structured feedback representation. Section~\ref{sec:certificate} presents the main decentralized stability certificate and provides a detailed proof of the result. Section~\ref{sec:num_eg} demonstrates the implications of the proposed conditions through an illustrative numerical example. Finally, Section~\ref{sec:conclusion} summarizes the main findings and discusses their implications for decentralized stability analysis and inverter-based grid operation.

\section{Preliminaries} \label{sec:preliminaries}
Our analysis relies on passivity theory and standard mechanisms for compensating the lack of passivity in interconnected systems. Here, we introduce passivity stability theorem, which provides sufficient conditions for the stability of feedback interconnections, and introduce loop transformation techniques that allow passivity properties to be modified through equivalent interconnection representations. These tools form the basis for the decentralized stability analysis developed in the remainder of the paper.

\subsection{Internal Feedback Stability}

We consider the standard negative feedback interconnection of two square, proper transfer matrices $H_1(s)$ and $H_2(s)$. The resulting closed-loop input--output relation can be written as
\begin{equation}\label{eq:gang_of_four}
\begin{bmatrix}
y_1 \\ y_2
\end{bmatrix}
\!=\!
\underbrace{
\begin{bmatrix}
(I + H_1 H_2)^{-1} H_1 \!\!&\!\! -(I + H_1 H_2)^{-1} H_1 H_2 \\
H_2 (I + H_1 H_2)^{-1} H_1 \!\!&\!\! H_2 (I + H_1 H_2)^{-1}
\end{bmatrix}
}_{=: \, H_1 \# H_2(s)}
\begin{bmatrix}
u_1 \\ u_2
\end{bmatrix}, 
\end{equation}
where we use $I$ as an identity matrix with proper dimensions, and $I_m$ as an $m\times m$ identity matrix. 

\begin{definition}[Internal Feedback Stability]\label{def:internal_stability}
The negative feedback interconnection of $H_1(s)$, $H_2(s)$ is said to be \emph{internally stable} if all transfer matrices comprising $H_1 \# H_2(s)$ are stable, i.e., all poles of $H_1 \# H_2(s)$ lie in the open left-half plane.
\end{definition}

Under the assumption that there are no right-half-plane pole--zero cancellations between $H_1(s)$ and $H_2(s)$—i.e., all right-half-plane poles of $H_1(s)$ and $H_2(s)$ are contained in the minimal realizations of $H_1(s)H_2(s)$ and $H_2(s)H_1(s)$—internal stability of one closed-loop transfer matrix implies internal stability of all others \cite{skogestad2005multivariable}. This notion will be used throughout the paper.

\subsection{Passivity Stability Theorem}
We briefly review standard passivity definitions and a classical passivity-based stability theorem for feedback interconnections.
\begin{definition}[Passivity]\label{def:passivity_pr}
    A square, proper, real rational transfer matrix $H(s)$ is \emph{passive} if:
    \begin{enumerate}[i.]
        \item $H(s)$ has poles in $\Re(s)\leq0$.
        \item The Hermitian part of $H(j\omega)$ is positive semidefinite for any $\omega$ for which $j\omega$ is not a pole:
        \begin{equation}
            H(j\omega) + H^*(j\omega) \succeq 0.
        \end{equation}
        \item Any pole on the imaginary axis is simple, and its residue
        \begin{equation}
            R = \lim_{s \to j\omega_0} (s - j\omega_0) H(s)
        \end{equation}
        is Hermitian positive semidefinite.
        
    \end{enumerate}
The transfer matrix is \emph{strictly passive} if inequalities in $i.$ and $ii.$ are strict.
\end{definition}

\begin{theorem}[Feedback Stability via Passivity]\label{thm:passivity_stability}
    Consider the standard negative feedback interconnection of two systems $H_1(s)$ and $H_2(s)$ with compatible dimensions. The closed-loop system is internally stable if the following conditions hold:
    \begin{enumerate}[i.]
        \item System $H_1(s)$ is passive, and system $H_2(s)$ is strictly passive.
        \item The loop gain magnitude at infinite frequency satisfies the small-gain condition:
        \begin{equation}
            \bar{\sigma}\left( H_1(j\infty) \right) \bar{\sigma}\left( H_2(j\infty) \right) < 1.
        \end{equation}
    \end{enumerate}
\end{theorem}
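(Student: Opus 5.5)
We prove Theorem~\ref{thm:passivity_stability} by a Nyquist-type homotopy argument on $\det(I+H_1H_2)$, using well-posedness and passivity to exclude any crossing of closed-loop poles into the closed right-half plane. The argument proceeds in four steps.

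\textbf{Step 1: well-posedness.} Condition ii gives $\bar\sigma(H_1(j\infty)H_2(j\infty))<1$. Hence $I+H_1(\infty)H_2(\infty)$ is invertible, and so are $(I+H_1(\infty)H_2(\infty))^{-1}$ and the entries of $H_1\#H_2$ in \eqref{eq:gang_of_four}. All these closed-loop maps are therefore proper, and the closed-loop poles coincide with the zeros of $\det(I+H_1(s)H_2(s))$, counted together with the open-loop poles. Under the standing no-RHP-cancellation assumption stated after Definition~\ref{def:internal_stability}, it suffices to show that a single closed-loop map, for example $(I+H_1H_2)^{-1}$ together with $H_2(I+H_1H_2)^{-1}$, has no poles in $\Re(s)\ge 0$.

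\textbf{Step 2: homotopy.} Embed the loop in the family $H_1$, $\tau H_2$ for $\tau\in[0,1]$. Since $\tau H_2$ remains strictly passive for $\tau>0$ and the gain bound ii only improves, every $\tau$ gives a well-posed loop. At $\tau=0$ the closed loop consists of the open-loop blocks themselves. These have no poles in $\Re(s)>0$ by passivity condition i. The imaginary-axis poles of $H_1$ are simple with PSD residue, and $H_2$ has none. This starting configuration must be handled separately through the residue condition iii.

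Closed-loop poles depend continuously on $\tau$, because properness prevents escape to infinity. Consequently, instability at $\tau=1$ would force a closed-loop pole to lie on $j\mathbb{R}$ for some $\tau$. We must therefore show that, for $\tau\in(0,1]$, there is no $\omega$ and nonzero $v$ with $(I+\tau H_1(j\omega)H_2(j\omega))v=0$.

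\textbf{Step 3: the key inequality (main obstacle).} Suppose such a $v$ exists, and set $w=H_2 v$. Then $v=-\tau H_1 w$. Taking real parts gives
\[
\Re(v^*H_2 v)\;=\;\Re(w^*v)\;=\;-\tau\,\Re(w^*H_1 w)\;\le 0 .
\]
By strict passivity the left side is strictly positive unless $v=0$, which is a contradiction.

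Two places need care. The first is frequencies $j\omega_0$ where $H_1$ has a pole. There I will rewrite the loop through a loop transformation so that $H_1$ appears via $(I+\epsilon H_1)^{-1}$, which is bounded near the pole. Alternatively, I can use the PSD residue to argue that $H_1(s)\approx R/(s-j\omega_0)$ maps into a cone with positive real part on a small right half-disc indentation, so the same sign argument holds on the indented contour. The second is $\omega\to\infty$, where strict positivity of $H_2+H_2^*$ may degenerate. There the small-gain condition ii already excludes singularity directly, since $\|\tau H_1(\infty)H_2(\infty)\|<1$.

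\textbf{Step 4: conclusion.} No closed-loop pole crosses $j\mathbb{R}$ along the homotopy. It remains to check that the imaginary-axis poles at $\tau=0$ move into the open left half-plane for small $\tau>0$; I expect this to follow from a first-order perturbation using the PSD residue and strict passivity of $H_2$. The closed loop at $\tau=1$ is then stable, and internal stability follows from Step~1.
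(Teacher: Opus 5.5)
Your route is sound in outline and uses the same key inequality as the paper, but its global logic is different. The paper takes an eigenvector $v$ of $H_2H_1(j\omega)$ for the eigenvalue $-1$ and adds $-v^*H_1^*v=v^*H_1^*H_2H_1v$ to its conjugate transpose. This gives $-v^*(H_1+H_1^*)v=v^*H_1^*(H_2+H_2^*)H_1v$, a sign contradiction that is your Step~3 with $H_1H_2$ replaced by $H_2H_1$. It then excludes $-1$ at $\omega=\infty$ via condition~ii and stops, treating ``$-1\notin\sigma(H_2H_1(j\omega))$ for all $\omega$'' as equivalent to stability. That is only the no-crossing half of a Nyquist argument. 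For $H_1=-2/(s+1)$, $H_2=1$, the matrix $I+H_1H_2$ is nonsingular on $j\mathbb{R}\cup\{\infty\}$, yet the closed loop has a pole at $s=1$. The paper also never uses the residue condition~iii or examines imaginary-axis poles of $H_1$, although its own passivized line block $\frac1s(J_E+D_E)$ has one at $s=0$.

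Your homotopy $H_2\mapsto\tau H_2$ supplies exactly the missing step, for three reasons:
\begin{itemize}
\item the hypotheses are invariant under this scaling for $\tau\in(0,1]$;
\item condition~ii keeps the leading coefficient of the closed-loop characteristic polynomial $\phi_1\phi_2\det(I+\tau H_1H_2)$ nonzero, with $\phi_i$ the open-loop characteristic polynomials, so no root enters from infinity;
\item condition~iii controls the starting configuration.
\end{itemize}
The cost is a genuine local analysis at imaginary-axis poles of $H_1$. A homotopy-free alternative is the classical fact that passivity implies $H_1(s)+H_1(s)^*\succeq0$ on $\Re s>0$, and strict passivity gives $H_2(s)+H_2(s)^*\succ0$ there. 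The same sign argument then excludes open right-half-plane roots directly. Your route needs only imaginary-axis data.

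That local analysis is the part still only sketched, and your indentation idea does not deliver it. A homotopy must rule out a root \emph{at} $j\omega_0$, not on a contour around it. Moreover, $H_1\approx R/(s-j\omega_0)$ alone does not make the Hermitian part of $H_1$ PSD on the half-disc, because the analytic remainder is not dominated along $\ker R$ or near the ends of the arc.

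Instead, write $R=UU^*$ with $U$ of full column rank $r=\mathrm{rank}\,R$. Put $H_1=UU^*/(s-j\omega_0)+H_{1,0}$ with $H_{1,0}$ analytic at $j\omega_0$, and let $A_\tau=I+\tau H_{1,0}H_2$. By Sylvester's determinant identity, the closed-loop roots near $j\omega_0$ are the roots of $\det\bigl((s-j\omega_0)I_r+\tau U^*H_2A_\tau^{-1}U\bigr)$.
\begin{itemize}
\item As $\tau\to0^+$, these roots are $j\omega_0-\tau\mu_k+o(\tau)$, where the $\mu_k$ are the eigenvalues of $U^*H_2(j\omega_0)U$. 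This matrix has positive definite Hermitian part, so the roots enter the open left half-plane, which proves Step~4.
\item For $\tau\in(0,1]$, the principal part is skew-Hermitian on $j\mathbb{R}$, so $H_{1,0}(j\omega_0)+H_{1,0}(j\omega_0)^*\succeq0$. Your Step~3 argument then shows that $A_\tau(j\omega_0)$ and $U^*H_2(j\omega_0)A_\tau(j\omega_0)^{-1}U$ are both nonsingular. Hence $j\omega_0$ is never a closed-loop root.
\end{itemize}
Finally, the reduction to a single closed-loop map in Step~1 is unnecessary. It also relies on a no-cancellation hypothesis that can fail on $j\mathbb{R}$. Tracking the roots of the closed-loop characteristic polynomial already is internal stability.
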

\begin{proof}
    The proof of this Theorem is standard. 
    For completeness we provide a proof in Appendix \ref{appendix:thm1}.
\end{proof}

\begin{figure}

\begin{subfigure}{0.25\textwidth}
\includegraphics[width=0.9\linewidth]{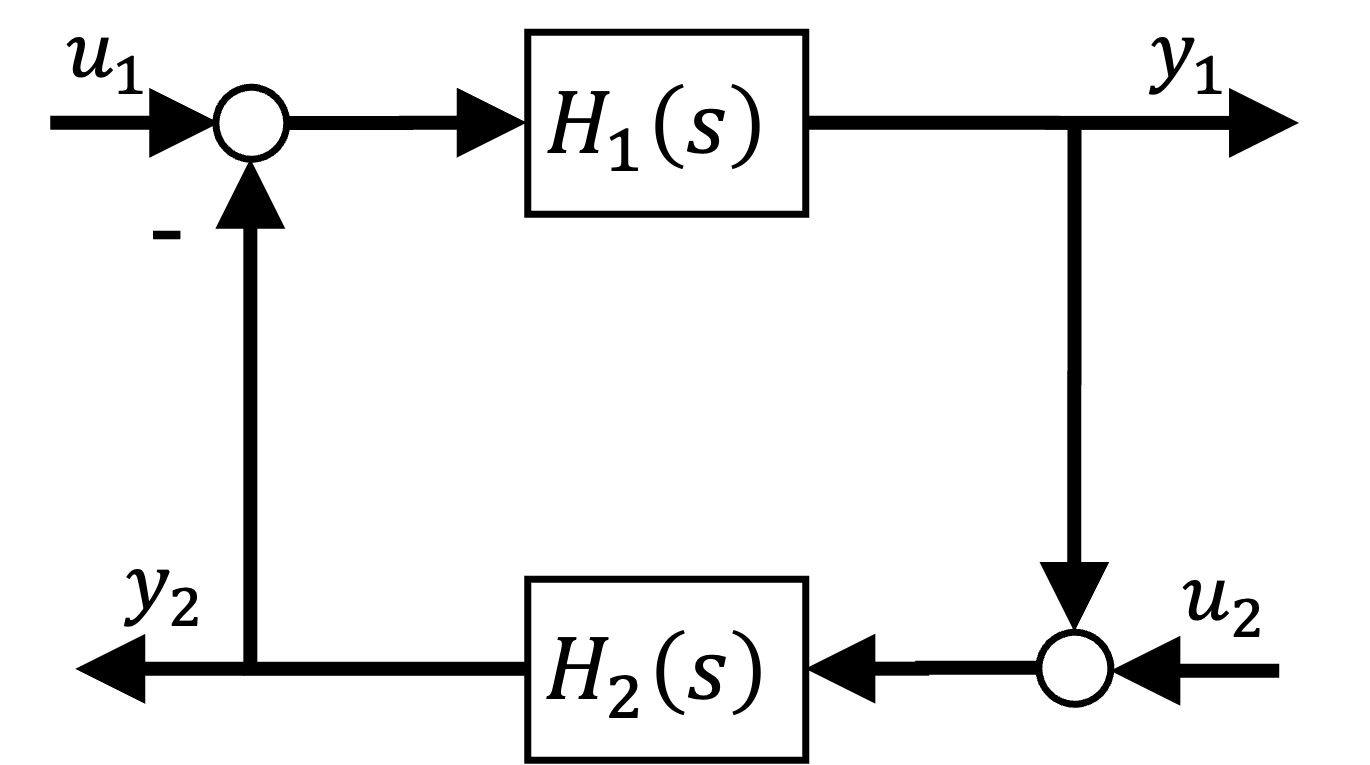}
\caption{Negative feedback interconnection of $H_1(s)$ and $H_2(s)$. } \label{fig:before_loop_transformation_012026}
\end{subfigure}\hspace*{\fill}
\begin{subfigure}{0.25\textwidth}
\includegraphics[width=0.9\linewidth]{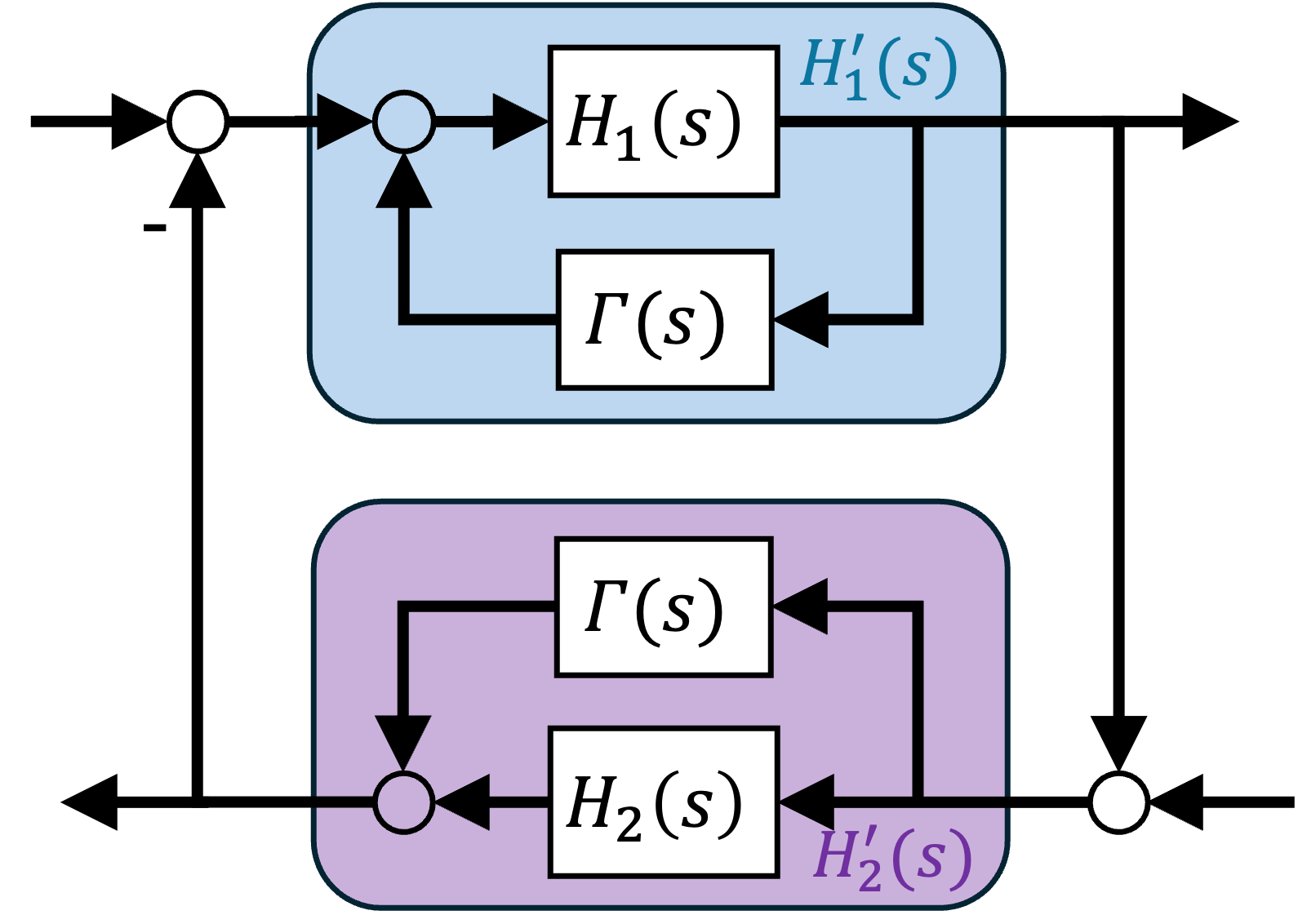}
\caption{Fig.~\ref{fig:before_loop_transformation_012026} with loop transformation $\Gamma(s)$. } \label{fig:after_loop_transformation_062226}
\end{subfigure}
\caption{Loop transformation. } 

\label{fig:loop_transformation}
\end{figure}

\subsection{Loop Transformation}
We will use loop transformations to compensate the lack of passivity in transmission lines, allowing in this way to apply Theorem \ref{thm:passivity_stability}. 
When doing this, one needs to ensure that the internal stability of the loop-shifted system implies the internal stability of the original system. 
We address this issue with the following two lemmas. 

\begin{lemma}[Corollary 5.5 in \cite{zhou1996robust}]\label{lemma:zhou_55}
    Suppose $H_1(s)$ is stable. Then system $H_1(s) \# H_2(s)$ is internally stable if and only if $\left[I+H_2(s) H_1(s)\right]^{-1}H_2(s)$ is stable. 
\end{lemma}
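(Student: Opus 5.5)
The plan is to prove the two directions separately, working directly with the four blocks of $H_1\#H_2$ in \eqref{eq:gang_of_four}. The whole argument rests on expressing each block through the single transfer matrix $G(s) := \left[I+H_2(s)H_1(s)\right]^{-1}H_2(s)$, with $H_1$ stable, using only push-through identities.

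For necessity, assume $H_1\#H_2$ is internally stable. By the push-through identity,
\begin{equation}
H_2(I+H_1H_2)^{-1} = (I+H_2H_1)^{-1}H_2 = G,
\end{equation}
and the left-hand side is exactly the $(2,2)$ block of $H_1\#H_2$. By Definition~\ref{def:internal_stability} this block is stable, so $G$ is stable.

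For sufficiency, assume $G$ is stable. I will rewrite every block of $H_1\#H_2$ as sums and products of $H_1$, $G$, and constants. Since $H_1$ and $G$ are both stable, each block is then stable.
\begin{itemize}
\item The $(2,2)$ block equals $G$, as shown above.
\item For the $(1,2)$ block, $(I+H_1H_2)^{-1}H_1H_2 = I-(I+H_1H_2)^{-1}$. I then need $(I+H_1H_2)^{-1}$ to be stable. The identity
\begin{equation}
(I+H_1H_2)^{-1} = I - H_1(I+H_2H_1)^{-1}H_2 = I - H_1G
\end{equation}
can be verified by multiplying out by $(I+H_1H_2)$. This makes the $(1,2)$ block equal to $-H_1G$, which is stable.
\item The $(1,1)$ block is $(I+H_1H_2)^{-1}H_1 = (I-H_1G)H_1 = H_1 - H_1GH_1$, which is stable.
\item The $(2,1)$ block is $H_2(I+H_1H_2)^{-1}H_1 = GH_1$, which is stable.
\end{itemize}

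The main point needing care is well-posedness. The inverses must exist as proper rational matrices, meaning $\det(I+H_2(\infty)H_1(\infty))\neq0$, which I take as implicit in the statement since $G$ is assumed to be a well-defined stable transfer matrix. I also need $\det(I+H_1H_2)=\det(I+H_2H_1)$, which follows from Sylvester's determinant identity. With that in place, the identity $(I+H_1H_2)^{-1}=I-H_1G$ is the only nontrivial algebraic step.

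Note that no condition is imposed on $H_2$: it may be unstable, and any of its right-half-plane poles are cancelled within $G$. The stability of $H_1$ is what guarantees that multiplying by $H_1$ never introduces unstable poles, and that is the one place the hypothesis is used. Since this is Corollary~5.5 in \cite{zhou1996robust}, I could also simply cite it, but the argument above is self-contained.
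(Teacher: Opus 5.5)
Your proof is correct. The paper gives no argument of its own here; it simply imports Corollary~5.5 of \cite{zhou1996robust}. You instead prove it directly, writing the four blocks of \eqref{eq:gang_of_four} as $H_1-H_1GH_1$, $-H_1G$, $GH_1$ and $G$ via push-through identities, which is essentially the standard argument behind that corollary.

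Citing buys brevity. Your version is self-contained and works directly in the paper's negative-feedback convention, so no translation from Zhou's positive-feedback setup (with $P=H_1$, $\hat K=-H_2$) is needed. It also makes transparent that stability of $H_1$ is used only in the sufficiency direction.

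Two small remarks:
\begin{enumerate}[i.]
\item Your well-posedness caveat is not really an extra assumption. All you use is that $I+H_2H_1$ is invertible as a rational matrix. If $G$ is stable in the proper ($RH_\infty$) sense, then properness of $(I+H_1H_2)^{-1}=I-H_1G$ already forces $\det(I+H_1(\infty)H_2(\infty))\neq 0$.
\item You should rename $G$, since it clashes with the inverter transfer matrix $G(s)$ in \eqref{eq:ibr_model}.
\end{enumerate}
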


\begin{lemma}[Loop transformation internal stability equivalence]\label{lem:loop_transformation_equiv}
    Let $H_1(s)$ and $H_2(s)$ be proper real-rational transfer matrices with compatible dimensions, and let $\Gamma(s)$ be a stable proper transfer matrix of compatible dimension. 
    Suppose that \[ \det\!\left(I-\Gamma(s)H_1(s)\right)\not\equiv 0, \]
    and define
\[
H_1'(s) := H_1(s)\left(I \!-\! \Gamma(s)H_1(s)\right)^{-1}, \enspace H_2'(s) := H_2(s) + \Gamma(s).
\]
Assume that $H_1'(s)$ is proper real-rational and stable. Assume also that $H_1(s)$ is stable and that 
\[ \det\!\left(I+H_2(\infty)H_1(\infty)\right)\neq 0, \quad \det\!\left(I+H_2'(\infty)H_1'(\infty)\right)\neq 0. \]
Then $H_1(s) \# H_2(s)$ is internally stable if and only if $H_1'(s) \# H_2'(s)$ is internally stable.
\end{lemma}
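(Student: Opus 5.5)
Both $H_1$ and $H_1'$ are assumed stable, so Lemma~\ref{lemma:zhou_55} applies to both interconnections. It reduces internal stability of $H_1\#H_2$ to stability of $S:=(I+H_2H_1)^{-1}H_2$, and internal stability of $H_1'\#H_2'$ to stability of $S':=(I+H_2'H_1')^{-1}H_2'$. The invertibility conditions at $s=\infty$ make both inverses well defined and proper, so $S$ and $S'$ are proper real-rational. The plan is therefore to find an algebraic relation between $S$ and $S'$ in which $\Gamma$ and $H_1$ enter only through stable factors. Stability of one will then transfer to the other.

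The key computation is an identity for $I+H_2'H_1'$. Write $M:=(I-\Gamma H_1)^{-1}$, which is well defined as a rational matrix since $\det(I-\Gamma H_1)\not\equiv 0$. Then $H_1'=H_1M$, and the push-through identity gives $H_1'=(I-H_1\Gamma)^{-1}H_1$. Using $I=(I-\Gamma H_1)M$ we get
\begin{equation}
I+H_2'H_1'=(I-\Gamma H_1)M+(H_2+\Gamma)H_1M=(I+H_2H_1)M .
\end{equation}
Hence $(I+H_2'H_1')^{-1}=M^{-1}(I+H_2H_1)^{-1}=(I-\Gamma H_1)(I+H_2H_1)^{-1}$. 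This gives
\begin{equation}
S'=(I-\Gamma H_1)(I+H_2H_1)^{-1}(H_2+\Gamma).
\end{equation}
Now expand $(I+H_2H_1)^{-1}H_2=S$ and $(I+H_2H_1)^{-1}=I-SH_1$, and simplify:
\begin{equation}
S'=(I-\Gamma H_1)\bigl(S+\Gamma-SH_1\Gamma\bigr)=(I-\Gamma H_1)S(I-H_1\Gamma)+(I-\Gamma H_1)\Gamma .
\end{equation}
Since $\Gamma$ and $H_1$ are stable, this shows directly that stability of $S$ implies stability of $S'$. That is the ``only if'' direction.

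For the converse, note that the relation is symmetric: the original system is recovered from the transformed one by the shift $-\Gamma$. Indeed,
\begin{equation}
H_1'(I+\Gamma H_1')^{-1}=H_1M\bigl(M^{-1}\bigr)=H_1, \qquad H_2'-\Gamma=H_2,
\end{equation}
where we used $I+\Gamma H_1 M=M^{-1}+\Gamma H_1M\cdot$\,(regrouped), i.e.\ $(I-\Gamma H_1)M+\Gamma H_1M=M$, so that $I+\Gamma H_1'=M$. Rerunning the same computation with the roles swapped gives
\begin{equation}
S=(I+\Gamma H_1')S'(I+H_1'\Gamma)-(I+\Gamma H_1')\Gamma .
\end{equation}
Here $H_1'$ is stable by assumption, so stability of $S'$ implies stability of $S$.

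The main obstacle is bookkeeping rather than substance. First, every inverse must be justified: $\det(I+H_2H_1)$ and $\det(I+H_2'H_1')$ are nonzero at infinity, hence not identically zero, and $M$ is invertible as a rational matrix. Second, each factor must be proper, so that ``stable'' means poles in the open left-half plane with no improper growth. Third, one should check that the hypotheses of Lemma~\ref{lemma:zhou_55} are exactly what is assumed, namely stability of the first block, $H_1$ or $H_1'$. Both properness of $S,S'$ and the needed stability of $H_1$ and $H_1'$ are supplied by the stated assumptions, so the identity above closes the argument.
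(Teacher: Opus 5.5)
Your proposal is correct and follows essentially the same route as the paper's proof. You use Lemma~\ref{lemma:zhou_55} to reduce both interconnections to stability of $(I+H_2H_1)^{-1}H_2$ and $(I+H_2'H_1')^{-1}H_2'$, write the latter as $(I-\Gamma H_1)\bigl(S+\Gamma-SH_1\Gamma\bigr)$ in stable factors, and get the converse from the inverse shift $-\Gamma$ with $H_1=H_1'(I+\Gamma H_1')^{-1}$. You also spell out $I+H_2'H_1'=(I+H_2H_1)(I-\Gamma H_1)^{-1}$, which the paper leaves implicit. One intermediate equation is garbled: $I+\Gamma H_1M=M^{-1}+\Gamma H_1M$ should read $M^{-1}M+\Gamma H_1M$, but the ``i.e.'' line after it states the correct identity.
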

\begin{proof}
    See Appendix \ref{appendix:cor21}. 
\end{proof}
\begin{remark}[Necessity of Lemma~\ref{lem:loop_transformation_equiv}]
    A classical approach to address the equivalence of internal stability of loop-transformed systems usually requires no pole-zero cancellations on the \emph{closed} right-half plane \cite{doyle2013feedback}. 
    Unfortunately, the proposed system in this paper has a pole-zero cancellation on the imaginary axis, rendering this approach not applicable. 
    We make an emphasis on the requirement of \emph{closed} as some literature rather state it as \emph{open} \cite{skogestad2005multivariable} \cite{zhou1996robust}. 
\end{remark}


\section{Network Model} \label{sec:network_model}
In this section, we develop a first principle-based small-signal model of the IBRs and the transmission network, and then connect them to construct an interconnection. 

\subsection{Inverter-Based Resource (IBR) Model}

We model each IBR as a grid-forming (GFM) device that regulates frequency through active power feedback and voltage magnitude through reactive power feedback. 
Let $\Delta P_i$ and $\Delta Q_i$ denote the incremental active and reactive power injected by inverter $i$ into the network, and let $\Delta \omega_{i}$ and $\Delta \ln |v_i|$ denote the incremental frequency and logarithmic voltage magnitude at bus $i$. 
\textcolor{black}{Logarithmic voltage variables have also appeared in other works on analyses of droop-controlled inverter networks; see, e.g., \cite{schiffer2014conditions}.}

We adopt the standard sign convention in which positive power corresponds to injection into the grid, and the inverter control uses the \emph{negative} of these quantities as feedback inputs. 
\textcolor{black}{We assume the devices to be GFM devices with filter-droop control while neglecting fast inner-loop dynamics.} 
The small-signal inverter dynamics are \textcolor{black}{therefore} modeled by
\begin{equation}\label{eq:Gi_def}
\begin{bmatrix}
\Delta \omega_i\\
\Delta \ln|v_i|
\end{bmatrix}
=
G_i(s)
\begin{bmatrix}
-\Delta P_i\\
-\Delta Q_i
\end{bmatrix},
\end{equation}
where the transfer matrix
\begin{equation}\label{eq:Gi_tf}
G_i(s)
:=
\begin{bmatrix}
\frac{k_{p,i}}{\tau_{\omega,i}s+1} & 0\\
0 & \frac{k_{q,i}}{\tau_{v,i}s+1}
\end{bmatrix}, 
\end{equation}
$k_{p,i}, k_{q,i}>0$ are the active power–frequency droop gain and reactive power-voltage droop gain, respectively. 
$\tau_{\omega,i}, \tau_{v,i}>0$ are the frequency and voltage control time constants, respectively. 

We define
\[
\begin{bmatrix}
    -\Delta P \\
    -\Delta Q
\end{bmatrix} \!:=\! \begin{bmatrix}
    -\Delta P_{1} \\
    -\Delta Q_{1} \\
    \vdots \\
    -\Delta P_{n} \\
    -\Delta Q_{n}
\end{bmatrix} \!\in\mathbb{R}^{2n}, 
\begin{bmatrix}
    \Delta \omega \\
    \Delta \ln |v|
\end{bmatrix} \!:=\! \begin{bmatrix}
    \Delta \omega_{1} \\
    \Delta \ln |v_{1}| \\
    \vdots \\
    \Delta \omega_{n} \\
    \Delta \ln |v_{n}|
\end{bmatrix}\!\in\mathbb{R}^{2n}
\]
as the input and output signals of the IBR devices. 
Thus the dynamics of all the inverters concatenated is characterized by
\begin{equation} \label{eq:ibr_model}
    \begin{bmatrix}
    \Delta \omega \\
    \Delta \ln |v|
\end{bmatrix} = G(s) \begin{bmatrix}
    -\Delta P \\
    -\Delta Q
\end{bmatrix}, 
\end{equation}
where $G(s) = \mathrm{blkdiag}\left(
G_{1}(s),\cdots, G_{n}(s)\right)$. 

\subsection{Line Model}
Consider a lossless transmission line $e=\{i,j\}$ connecting buses $i$ and $j$.
Let
\[
v_i = |v_i|e^{j\theta_i}, \qquad
v_j = |v_j|e^{j\theta_j}, \qquad
\theta_{ij} := \theta_i-\theta_j
\]
denote the steady-state voltage phasors and phase difference.
The line is modeled as purely inductive with inductance $L_e$, and we define the normalized susceptance
$b_e := \frac{1}{\omega_0 L_e}$, where $\omega_0$ is the synchronous angular frequency.

The steady-state complex power injected from bus $i$ into the line is
\[
S_{ij} := P_{ij} + jQ_{ij} = v_i\, i_{ij}^*,
\]
with $i_{ij} = j b_e (v_i - v_j)$, and $S_{ji}$ defined analogously.

Linearizing the AC power-flow equations around the operating point yields a quasi-stationary, port-based small-signal model of the line,
\begin{equation}\label{eq:single_line_model}
\begin{bmatrix}
\Delta P_{ij} \\
\Delta Q_{ij} \\
\Delta P_{ji} \\
\Delta Q_{ji}
\end{bmatrix}
=
N_e(s)
\begin{bmatrix}
\Delta \omega_i \\
\Delta \ln |v_i| \\
\Delta \omega_j \\
\Delta \ln |v_j|
\end{bmatrix},
\end{equation}
where the line transfer matrix admits the factorization
\begin{equation}\label{eq:Ne_factorization}
N_e(s)
=
J_e\,
\mathrm{diag}\!\left(\frac{1}{s},\,1,\,\frac{1}{s},\,1\right).
\end{equation}
We introduce the notation $W_n(s) = \diag\left(\frac{1}{s}, 1, \ldots, \frac{1}{s}, 1\right) \in \mathbb{C}^{n \times n}$, thus $N_e(s) = J_{e}W_4(s)$. 
The integrator terms arise from the kinematic relation $\dot{\theta}_k=\omega_k$, while the static gain matrix
$J_e := b_e |v_i||v_j|\,\bar J_e$ captures the steady-state sensitivity of line power flows to voltage magnitudes and phase differences, and the normalized Jacobian $\bar J_e \in \mathbb{R}^{4\times 4}$ is given by
\[
\bar J_e \!=\!
\begin{bmatrix}
\cos\theta_{ij} \!&\! \sin\theta_{ij} \!&\! -\cos\theta_{ij} \!&\! \sin\theta_{ij}\\
\sin\theta_{ij} \!&\! 2\frac{|v_i|}{|v_j|}-\cos\theta_{ij} \!&\! -\sin\theta_{ij} \!&\! -\cos\theta_{ij}\\
-\cos\theta_{ij} \!&\! -\sin\theta_{ij} \!&\! \cos\theta_{ij} \!&\! -\sin\theta_{ij}\\
\sin\theta_{ij} \!&\! -\cos\theta_{ij} \!&\! -\sin\theta_{ij} \!&\! 2\frac{|v_j|}{|v_i|}-\cos\theta_{ij}
\end{bmatrix}.
\]

Let $E=\{e_1,\dots,e_m\}$ denote the set of transmission lines.
Stacking the incremental nodal variables and line power flows over all lines, we define
\[
\begin{bmatrix}
\Delta \omega_E \\
\Delta \ln |v_E|
\end{bmatrix}
:=
\begin{bmatrix}
\Delta \omega_{i_1} \\
\Delta \ln |v_{i_1}| \\
\Delta \omega_{j_1} \\
\Delta \ln |v_{j_1}| \\
\vdots \\
\Delta \omega_{i_m} \\
\Delta \ln |v_{i_m}| \\
\Delta \omega_{j_m} \\
\Delta \ln |v_{j_m}|
\end{bmatrix},
\qquad
\begin{bmatrix}
\Delta P_E \\
\Delta Q_E
\end{bmatrix}
:=
\begin{bmatrix}
\Delta P_{i_1 j_1} \\
\Delta Q_{i_1 j_1} \\
\Delta P_{j_1 i_1} \\
\Delta Q_{j_1 i_1} \\
\vdots \\
\Delta P_{i_m j_m} \\
\Delta Q_{i_m j_m} \\
\Delta P_{j_m i_m} \\
\Delta Q_{j_m i_m}
\end{bmatrix}.
\]

The concatenated line dynamics can then be written compactly as
\begin{equation}\label{eq:line_model}
\begin{bmatrix}
\Delta P_E \\
\Delta Q_E
\end{bmatrix}
=
N_E(s)
\begin{bmatrix}
\Delta \omega_E \\
\Delta \ln |v_E|
\end{bmatrix},
\end{equation}
where
\[
N_E(s)
=
\mathrm{blkdiag}\bigl(N_{e_1}(s),\dots,N_{e_m}(s)\bigr)
=
J_E\,W_{4m}(s),
\]
with
$J_E := \mathrm{blkdiag}\!\left(J_{e_1},\dots,J_{e_m}\right)$. 

\subsection{Interconnection Model}

\begin{figure*}
\centering
\begin{subfigure}{0.25\textwidth} 
\includegraphics[width=0.9\linewidth]{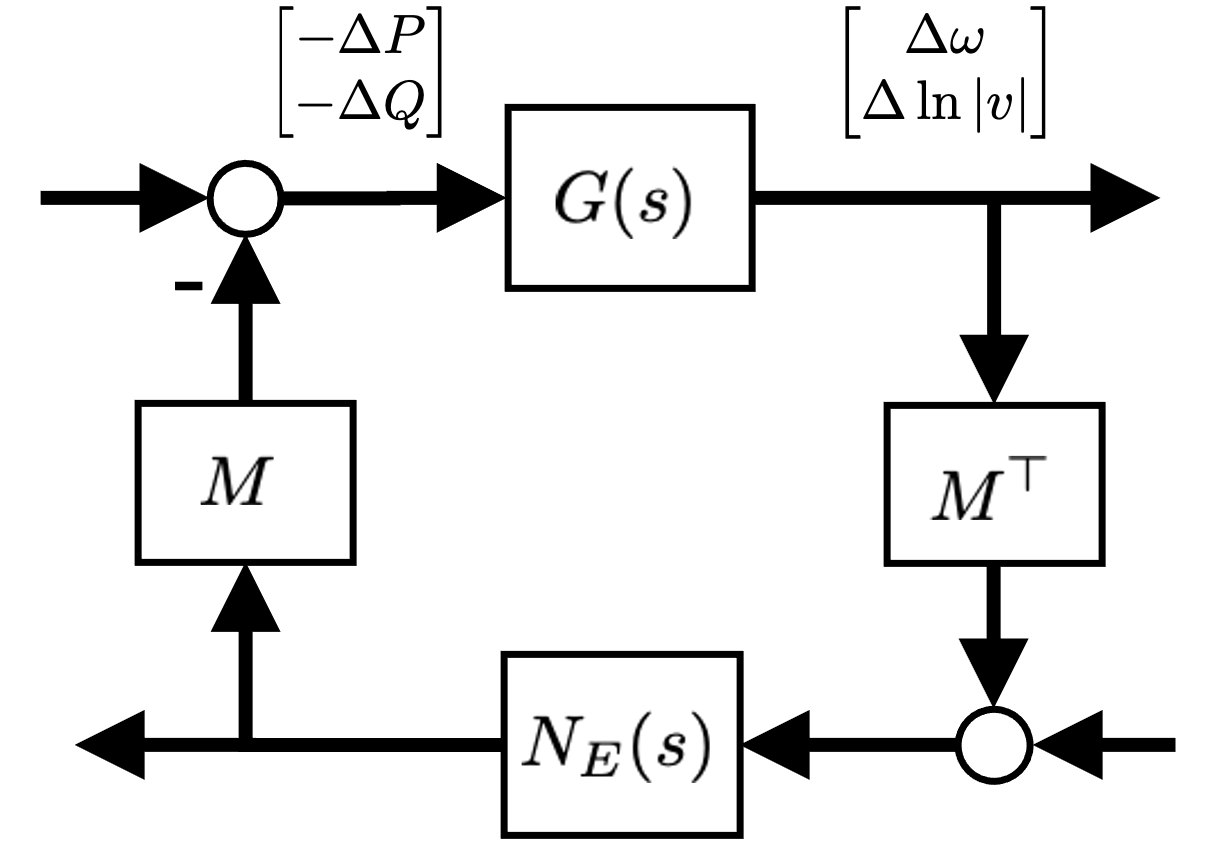}
\caption{System $G(s)\#_M N_E(s)$.} 
\label{fig:our_original_loop}
\end{subfigure}\hspace*{\fill}
\begin{subfigure}{0.25\textwidth} 
\includegraphics[width=0.9\linewidth]{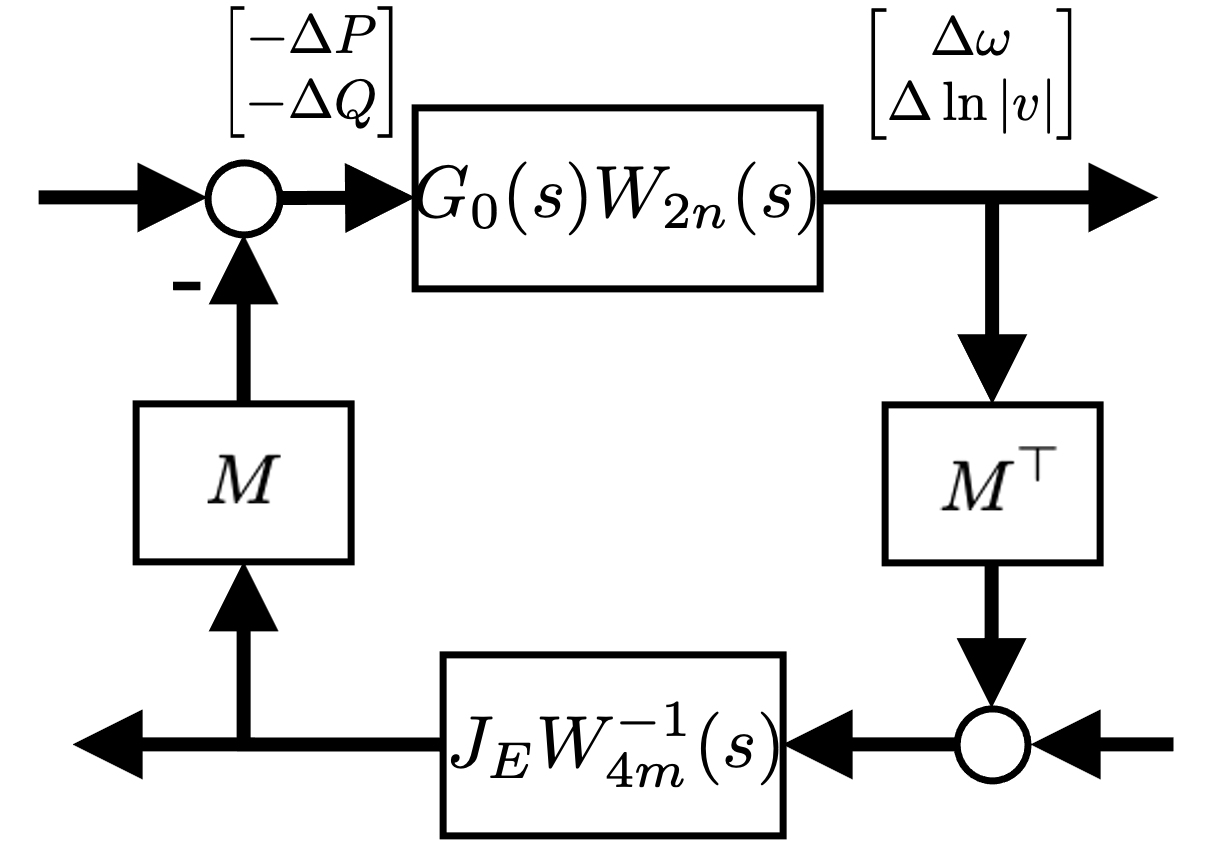}
\caption{ $G_0(s) W_{2n}(s)\#_M J_E W_{4m}^{-1}(s)$.} 
\label{fig:first_step_loop_shifting}
\end{subfigure}\hspace*{\fill}
\begin{subfigure}{0.25\textwidth} 
\includegraphics[width=0.9\linewidth]{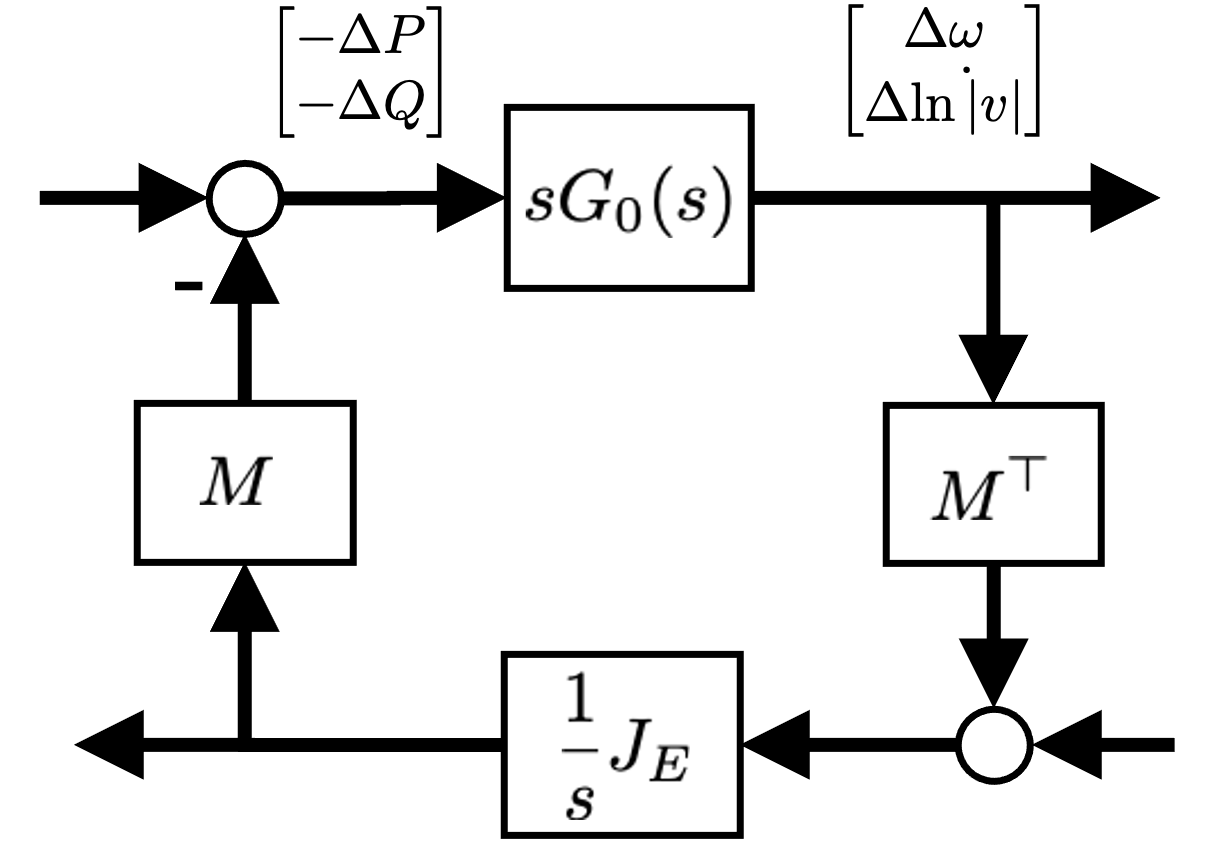}
\caption{System $sG_0(s) \#_M \frac{1}{s}J_E$. } 
\label{fig:before_loop_shifting}
\end{subfigure}\hspace*{\fill}
\begin{subfigure}{0.25\textwidth}
\includegraphics[width=0.9\linewidth]{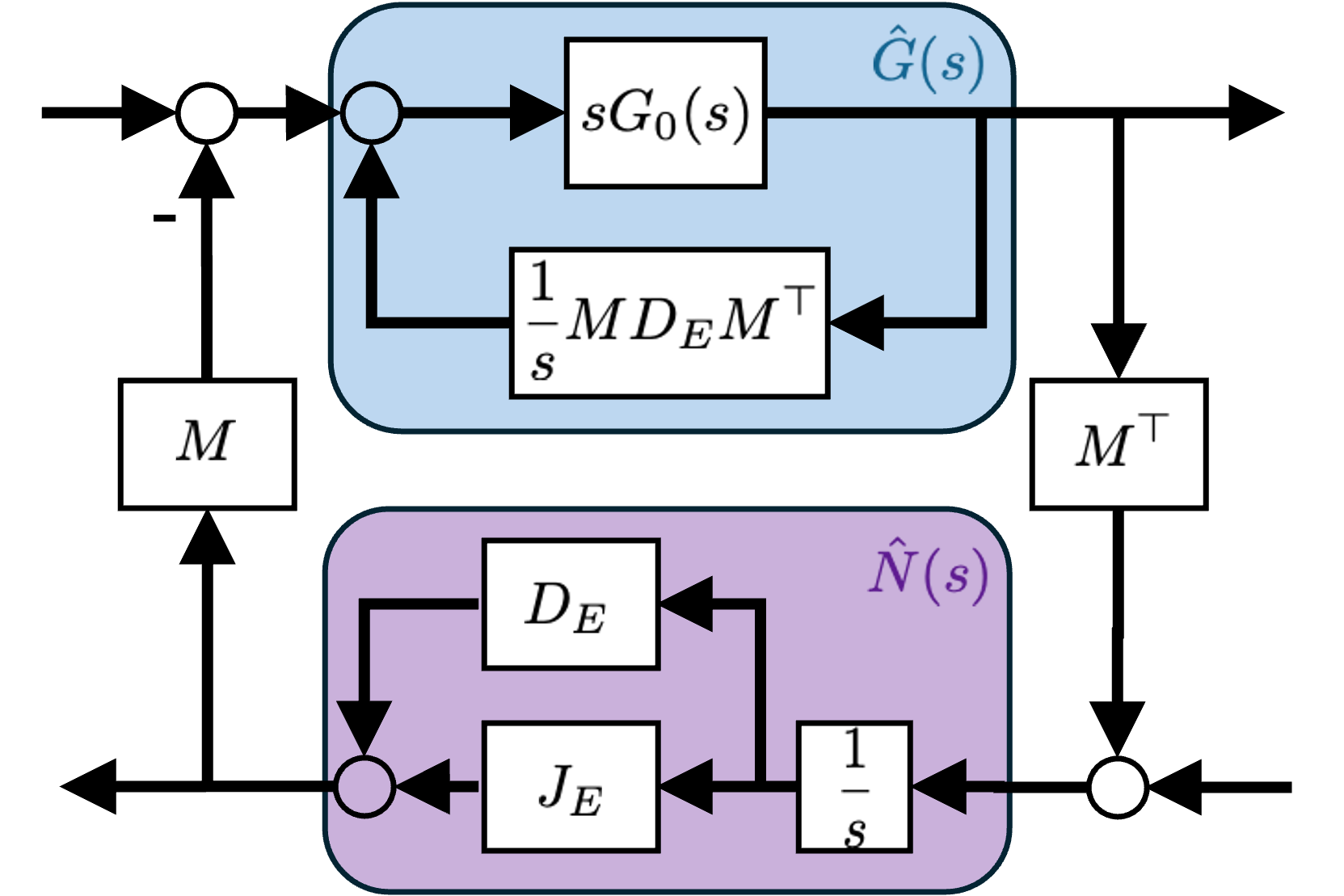}
\caption{System $\hat G(s) \#_M \hat N(s)$. } 
\label{fig:our_loop_shifted_loop_color}
\end{subfigure}
\caption{Systems for stability analysis. } 

\label{fig:our_loop_transformation}
\end{figure*}

To connect the bus vector signals and the line-end vector signals, we introduce the matrix
\begin{equation}\label{eq:def_M}
    M:= \begin{bmatrix}
        M_{e_1} & M_{e_2} & \cdots & M_{e_m}
    \end{bmatrix} \in \{0,1\}^{2n\times 4m}, 
\end{equation}
where for each line, $M_{e}=\begin{bmatrix}
        e_{i} & e_{j}
    \end{bmatrix}\otimes I_{2} \in \{0,1\}^{2n\times 4}$, and $e_k$ is the $k$-th standard basis vector. 

Therefore, the bus vector signals and the line-end vector signals can be related as 
\begin{equation}\label{eq:lifting_and_aggregation}
\begin{bmatrix}
    \Delta \omega_{E} \\
    \Delta \ln |v_{E}|
\end{bmatrix} = M^{\top}\begin{bmatrix}
    \Delta \omega \\
    \Delta \ln |v|
\end{bmatrix}, \quad \begin{bmatrix}
    \Delta P \\
    \Delta Q
\end{bmatrix} = M \begin{bmatrix}
    \Delta P_{E} \\
    \Delta Q_{E}
\end{bmatrix}
\end{equation}

Equations \eqref{eq:ibr_model}, \eqref{eq:line_model}, and \eqref{eq:lifting_and_aggregation} provide a compact first-principles description of the full networked system, in which the inverter dynamics and the line dynamics are interconnected exclusively through the operators $M^\top$ (node-to-edge) and $M$ (edge-to-node). 
\textcolor{black}{This port-based representation is inspired by the network decomposition framework in \cite{pates2014scalable}, where heterogeneous components are represented through local input-output ports and coupled through structured interconnection operators. }
We denote this interconnection between systems $G$ and $N_E$ via $M$ and $M^{\top}$ as $G \#_M N_{E}$. 
This structure will be the basis for the passivity and stability analysis in the sequel.

\section{Decentralized Stability Certificate} \label{sec:certificate}
We now present and prove the main result of this paper. 

\subsection{Main Result}

\begin{theorem}[Decentralized Stability Certificate]\label{thm:main_result}
Consider the device and network models in \eqref{eq:ibr_model}, \eqref{eq:line_model}, and \eqref{eq:lifting_and_aggregation}, interconnected as $G(s) \#_M N_E(s)$ in Fig.~\ref{fig:our_original_loop}.
Assume that the operating point satisfies $\cos\theta_{ij}>0$ for all $\{i,j\}\in E$. 
If the following condition holds:
For each line $e=\{i,j\}\in E$,
    \begin{equation}\label{eq:condition_line_1}
        \begin{aligned}
            \left(2\frac{d_{e}}{b_{e}|v_i||v_j|}-\frac{2}{\cos\theta_{ij}}\right)
            \sqrt{\left(\frac{Q_{ij}-Q_{ji}}{b_{e}|v_i||v_j|}\right)^2+4} \\
            +\left(\frac{d_{e}}{b_{e}|v_i||v_j|}\right)^2
            -\frac{2}{\cos\theta_{ij}}\frac{d_{e}}{b_{e}|v_i||v_j|}
            +4 \ge 0, 
        \end{aligned}
    \end{equation}
    where $\sum_{e:i\in e} d_e = \frac{1}{k_{q,i}}$. 
    Then $G(s) \#_M N_E(s)$ is internally stable.
\end{theorem}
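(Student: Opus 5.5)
The plan is to reduce the interconnection $G\#_M N_E$, through the chain of equivalent loops in Fig.~\ref{fig:our_loop_transformation}, to a loop $\hat G\#_M\hat N$ with two properties: the device side is passive and bus-decoupled, and the line side is strictly passive line by line exactly when \eqref{eq:condition_line_1} holds. Theorem~\ref{thm:passivity_stability} then gives internal stability of the shifted loop. Lemma~\ref{lem:loop_transformation_equiv}, together with Lemma~\ref{lemma:zhou_55}, carries this back to the original loop.

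\emph{Step 1: move the integrators to the device side.} Because $M$ is built from $I_2$ blocks, $M^\top W_{2n}=W_{4m}M^\top$ and $W_{2n}M=MW_{4m}$. Writing $G=G_0W_{2n}$ shows that the loop gain of $G\#_M N_E$ coincides, up to a similarity by $W$, with that of $sG_0\#_M \tfrac1s J_E$ (Figs.~\ref{fig:first_step_loop_shifting}--\ref{fig:before_loop_shifting}). One then checks that internal stability is preserved. The only subtlety is the pole--zero cancellation at $s=0$, and this is precisely where Lemma~\ref{lem:loop_transformation_equiv} is needed rather than the classical result. On the device side this produces, per bus, the channels $k_{p,i}/(\tau_{\omega,i}s+1)$ for frequency (here I expect the $s$ factor to cancel the $1/s$ of $W$ in the angle channel) and $k_{q,i}/(\tau_{v,i}s+1)$ for voltage.

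\emph{Step 2: loop shift on the voltage channels.} Take the constant, hence stable, $\Gamma=M D_E M^\top$ with $D_E=\mathrm{blkdiag}_e\,\mathrm{diag}(0,d_e,0,d_e)$. Then $\Gamma=\mathrm{blkdiag}_i\,\mathrm{diag}(0,\sum_{e\ni i}d_e)=\mathrm{blkdiag}_i\,\mathrm{diag}(0,1/k_{q,i})$. Applying Lemma~\ref{lem:loop_transformation_equiv}, the new device block in the voltage channel becomes $k_{q,i}/(\tau_{v,i}s+1-1)=k_{q,i}/(\tau_{v,i}s)$. This is an integrator with positive residue and is therefore passive, while the frequency channel stays strictly proper with a positive real part, so $\hat G$ is passive. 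Its value at infinity is zero, so the small-gain condition at infinity in Theorem~\ref{thm:passivity_stability} is trivial. The edge side becomes $\hat N_e=\tfrac1s J_e+\mathrm{diag}(0,d_e,0,d_e)$, modulo the ordering of the $W$ factors fixed in Step~1. I will verify the hypotheses of Lemma~\ref{lem:loop_transformation_equiv} directly: the determinant conditions and the stability of $H_1'$.

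\emph{Step 3: per-line passivity, which I expect to be the main obstacle.} Since $\hat N=M$-lifted $\mathrm{blkdiag}(\hat N_e)$, it suffices that each $\hat N_e$ is (strictly) passive. On $j\omega$, the Hermitian part reduces to a real symmetric matrix built from $\bar J_e$ scaled by $b_e|v_i||v_j|$ plus $\mathrm{diag}(0,\delta,0,\delta)$, where $\delta=d_e/(b_e|v_i||v_j|)$. The angle entries are weighted by the $1/s$ structure, and the residue at $s=0$ must be positive semidefinite. Set $r=|v_i|/|v_j|$ and note $(Q_{ij}-Q_{ji})/(b_e|v_i||v_j|)=r-1/r$, hence $\sqrt{(\cdot)^2+4}=r+1/r$.

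The angle block $\cos\theta_{ij}\begin{bmatrix}1&-1\\-1&1\end{bmatrix}$ is positive semidefinite because $\cos\theta_{ij}>0$, and it has kernel $(1,1)$. I plan to take a Schur complement with respect to its range direction $(1,-1)$. Doing so subtracts $\sin^2\theta_{ij}/\cos\theta_{ij}$-type terms, coming from the $\pm\sin\theta_{ij}$ couplings, from the voltage block $\begin{bmatrix}2r-c+\delta&-c\\-c&2/r-c+\delta\end{bmatrix}$. Using $c+\sin^2\theta/c=1/c$, the resulting $2\times2$ condition should be trace $\ge 0$ and determinant $\ge 0$. Expanding these in terms of $r+1/r$, $\delta$ and $1/\cos\theta_{ij}$ should reproduce \eqref{eq:condition_line_1}.

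The delicate points I anticipate are two. First, the strictness needed for Theorem~\ref{thm:passivity_stability} when the inequality in \eqref{eq:condition_line_1} is only non-strict. If needed I will handle this by placing the strict passivity on $\hat G$, for instance via the filtered frequency channel, or by a perturbation argument. Second, treating the kernel direction $(1,1)$, i.e.\ the common-angle mode, through the imaginary-axis pole conditions in Definition~\ref{def:passivity_pr}.
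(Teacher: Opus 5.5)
There is a genuine gap in Steps 1 and 2. Multiplying $G_0$ by $s$ multiplies \emph{both} channels. So $G'=sG_0$ has voltage channel $\frac{k_{q,i}s}{\tau_{v,i}s+1}$, with output $\Delta\dot{\ln|v_i|}$, not $\frac{k_{q,i}}{\tau_{v,i}s+1}$. The pair you write down, the filtered droop $G$ against $\frac1sJ_E$, carries an extra integrator in every voltage loop and is not equivalent to $G\#_MN_E$.

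On that pair the constant shift $\Gamma=MD_EM^\top$ cannot work. The shifted line $\frac1sJ_e+D_e$ has residue $J_e$ at $s=0$, independent of $d_e$. Your own Schur-complement computation with $\delta=0$ gives determinant $4-\frac{2}{\cos\theta_{ij}}\bigl(\frac{|v_i|}{|v_j|}+\frac{|v_j|}{|v_i|}\bigr)<0$ unless $|v_i|=|v_j|$ and $\theta_{ij}=0$. Hence condition iii of Definition~\ref{def:passivity_pr} fails at every loaded operating point, for every choice of $d_e$. Moreover, since $J_e$ is symmetric, the Hermitian part of this block on $j\omega$ is just $2D_e$, not ``$\bar J_e$ plus $\mathrm{diag}(0,\delta,0,\delta)$''.

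The device side fails too. Your $\hat G$ contains the integrator $k_{q,i}/(\tau_{v,i}s)$, so $H_1'$ is not stable, which Lemma~\ref{lem:loop_transformation_equiv} requires. Its voltage channel also has zero Hermitian part, so neither block is strictly passive and Theorem~\ref{thm:passivity_stability} does not apply. Your Step~3 algebra for $\bar J_e+\bar D_e\succeq0$ is correct and reproduces \eqref{eq:condition_line_1}; it is the paper's congruence by $T$ in other coordinates. But it is not the passivity test of the block you built.

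The missing idea is that the shift must itself be an integrator, acting on the correct $G'$. The paper uses $\frac1sK$ with $K=MD_EM^\top=\mathrm{blkdiag}_i\,\mathrm{diag}(0,1/k_{q,i})$.
\begin{itemize}
\item The line becomes $\frac1s(J_e+D_e)$. Its Hermitian part vanishes on $j\omega$ and its residue is $J_e+D_e$, so passivity is exactly $\bar J_e+\bar D_e\succeq0$.
\item The voltage channel of the device becomes $\frac{k_{q,i}s}{\tau_{v,i}s+1-1}=k_{q,i}/\tau_{v,i}$, a positive constant. So $\hat G$ is stable and strictly passive, and $\hat N(\infty)=0$ makes the condition at infinity trivial.
\end{itemize}
This also resolves both of your ``delicate points''. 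Strictness sits on the device side, so no perturbation argument is needed. The common-angle direction is simply the kernel of a positive semidefinite residue.

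Separately, your claim that a similarity by $W$ plus Lemma~\ref{lem:loop_transformation_equiv} returns you to the original loop is too quick. One has $(I+G'\tfrac1sJ_E)^{-1}G'=\mathrm{diag}(1,s,\dots,1,s)\,(I+GN_E)^{-1}G$, and the factor $s$ can hide a pole at $s=0$ in the $\Delta\ln|v|$ channel. The paper invokes Lemma~\ref{lem:loop_transformation_equiv} for the loop-shift step. It then closes this remaining issue with the separate final-value argument of Lemma~\ref{lemma:fvt}.
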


\begin{remark}\label{rem:kp}
    \textcolor{black}{This certificate imposes no restriction on $k_{p,i}$ or $\tau_{\omega,i}$, in agreement with Remark~5.11 of \cite{schiffer2014conditions}, where stability of the lossless microgrid is likewise independent of the frequency droop gains and filter time constants.} 
\end{remark}

\subsection{Proof of Main Theorem} \label{sec:proof}

\begin{figure}
    \centering
    \includegraphics[width=0.7\linewidth]{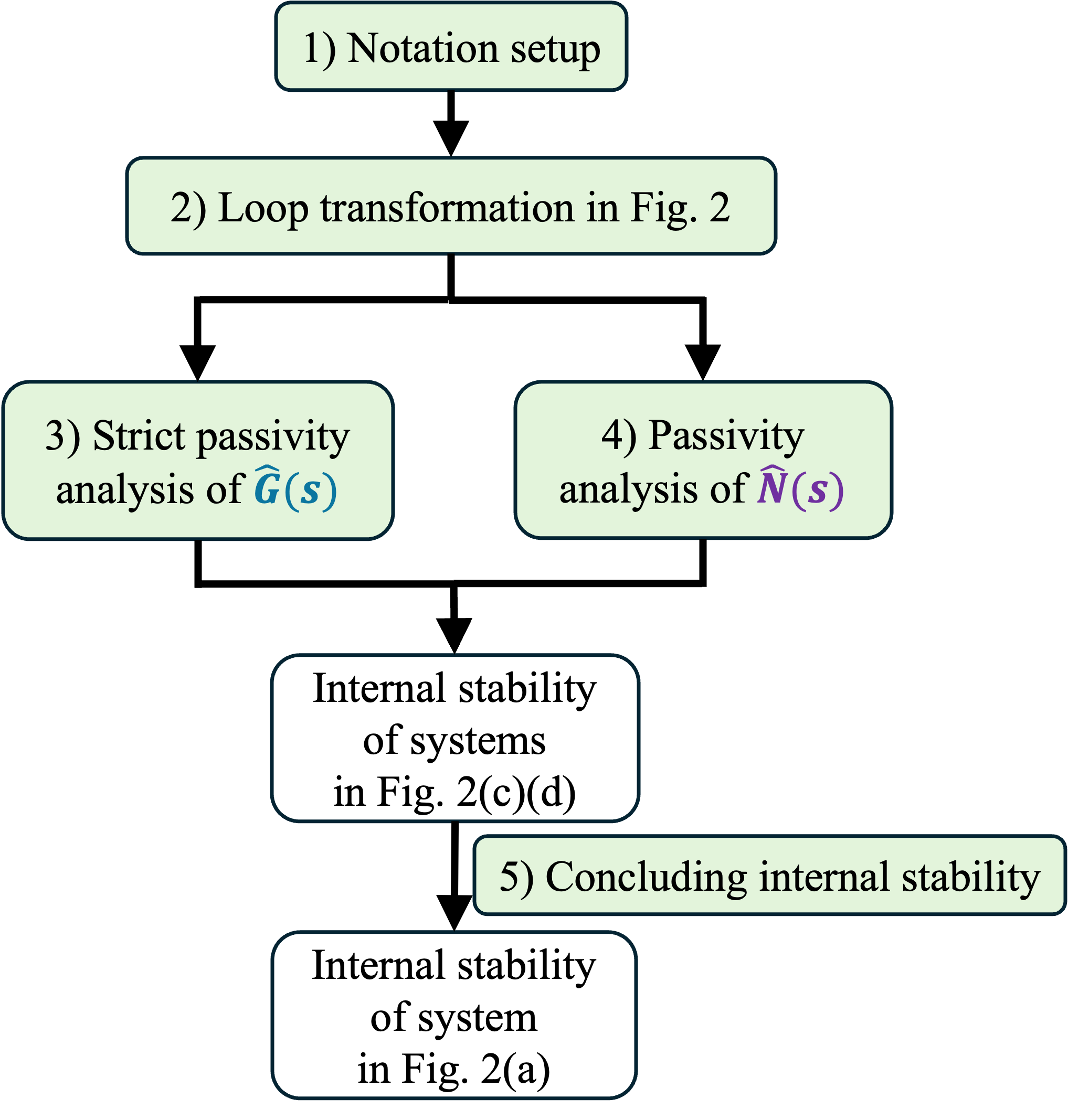}
    \caption{Flow chart of the proof of Theorem \ref{thm:main_result}. The colored blocks represent the steps. }
    \label{fig:proof_flow_chart}
\end{figure}

We prove Theorem \ref{thm:main_result} by following the flow chart in Fig.~\ref{fig:proof_flow_chart}. 

\subsubsection{Notation setup}
To facilitate the stability analysis, we normalize the system parameters around the operating point. 
For each line \(e=\{i,j\}\), define $\theta_e:=\theta_{ij}$, $c_{\theta_e}:=\cos\theta_e$, $s_{\theta_e}:=\sin\theta_e$. 
Furthermore, inspired by \cite{pates2014scalable}, we introduce the normalized gain and the corresponding diagonal matrix
\begin{equation}\label{eq:dbar_def}
\bar d_{e} := \frac{d_{e}}{b_{e}|v_i||v_j|}, \quad \bar D_{e} = \mathrm{diag}(0,\bar d_{e}, 0,\bar d_{e}),
\end{equation}
to be used in our loop transformation, and the normalized reactive power flow $p_{ij}$, $q_{ij}$ and $q_{ji}$
\[
p_{ij}:=\frac{P_{ij}}{b_{e}|v_i||v_j|},\qquad
q_{ij}:=\frac{Q_{ij}}{b_{e}|v_i||v_j|},\qquad
q_{ji}:=\frac{Q_{ji}}{b_{e}|v_i||v_j|}.
\]
Also notice the standard power-flow identities of lossless lines 
\begin{equation}\label{eq:lossless_pf_identities_norm}
p_{ij}=\sin\theta_{ij},
\qquad
q_{ij}-q_{ji}=\frac{Q_{ij}-Q_{ji}}{b|v_i||v_j|}
=\frac{|v_i|}{|v_j|}-\frac{|v_j|}{|v_i|}. 
\end{equation}

We will prove our results using the following coordinates
\begin{align}
\Sigma :=& \frac{|v_i|}{|v_j|}+\frac{|v_j|}{|v_i|}+\bar d_{e}=\sqrt{(q_{ij}-q_{ji})^2+4}+\bar d_{e}, \label{eq:Sigma_def}\\
\Delta :=& \frac{|v_i|}{|v_j|}-\frac{|v_j|}{|v_i|}=q_{ij}-q_{ji}.\label{eq:Delta_def}
\end{align}

\subsubsection{Loop transformation}

The map in \eqref{eq:line_model} is in general not passive, as $J_E$ is in general not positive semidefinite. 
To facilitate the subsequent analysis, we consider the systems in Fig.~\ref{fig:our_loop_transformation}, in which Fig.~\ref{fig:our_original_loop} is the original system, and can be equivalently expressed as in Fig.~\ref{fig:first_step_loop_shifting}, where  
\begin{equation}
    G_0(s) =  \mathrm{blkdiag}\left(\begin{bmatrix}
        \frac{k_{p,i}}{s(\tau_{\omega,i}s+1)} & 0\\
0 & \frac{k_{q,i}}{\tau_{v,i}s+1}
    \end{bmatrix}\right), 
\end{equation}
with the dynamics $\begin{bmatrix}
        \Delta \theta \\
        \Delta \ln |v|
    \end{bmatrix}= G_{0}(s) \begin{bmatrix}
        -\Delta P \\
        -\Delta Q
    \end{bmatrix}$. 

In Fig.~\ref{fig:before_loop_shifting}, $G'(s):=sG_{0}(s)$ characterizes the dynamics
\begin{equation}\label{eq:G_prime_dyn}
    \begin{bmatrix}
        \Delta \omega \\
        \Delta \dot{\ln |v|}
    \end{bmatrix} = G'(s) \begin{bmatrix}
        -\Delta P \\
        -\Delta Q
    \end{bmatrix}. 
\end{equation}

For each $e=\{i,j\}$, we choose loop shifting parameter $d_{e} \ge 0$, and apply the corresponding loop shifting matrix $D_{E}$ to the line dynamics in Fig.~\ref{fig:before_loop_shifting}, which yields the loop shifted interconnection in Fig.~\ref{fig:our_loop_shifted_loop_color}. 
On the IBR side, this loop transformation induces the additional nodal positive feedback $\dfrac{1}{s}K\begin{bmatrix}
    \Delta \omega \\
    \Delta \dot{\ln |v|}
\end{bmatrix}$, where $K:=M\,D_E\,M^\top$. 
By construction, $K$ is diagonal and acts only on the voltage-magnitude channels: $K=\mathrm{diag}(0,\kappa_1,0,\kappa_2,\dots,0,\kappa_n)$, where $\kappa_i = \sum_{e:i \in e} d_e$. 
Note that $D_e$ acts locally on each line, whereas $K$ aggregates these effects at the nodal level. 

\subsubsection{Strict passivity analysis of loop shifted devices $\hat G(s)$}
The diagonal term $\frac{1}{s}K$ is equivalent to introducing, at each bus $i$, a local positive feedback in the $\dot{\ln|v_i|}$ channel with gain $\kappa_i$. 

Given the IBR dynamics in \eqref{eq:G_prime_dyn}, the additional feedback $\kappa_i/s$ acts as a positive feedback around $\frac{k_{q,i}s}{\tau_{v,i}s+1}$ in voltage channel. 
Consequently, the effective voltage-channel dynamics become $\frac{k_{q,i}s}{\tau_{v,i}s+1-\kappa_i k_{q,i}}$, which remains \emph{strictly} passive with $\kappa_i = \frac{1}{k_{q,i}}$. 
\subsubsection{Passivity analysis of loop shifted lines $\hat N(s)$}
With the loop transformation introduced, each of the line transfer function $\frac{1}{s}J_e$ can be replaced by the shifted map 
\begin{equation}\label{eq:shifted_line_map}
\frac{1}{s}J_e
\;=\;
\frac{1}{s}(J_e + D_{e})
\;-\;
\frac{1}{s}D_{e}. 
\end{equation}
Equivalently, we define the \emph{passivized} line block
\begin{equation}\label{eq:Nline_bar_def}
\bar N_{\mathrm{ln},e}(s):=\frac{1}{s}\,(J_e + D_{e}),
\qquad
N_{D,e}(s):=\frac{1}{s}\,D_{e},
\end{equation}
so that the original line is decomposed into a passive candidate $\bar N_{\mathrm{ln},e}$ and a diagonal integrator $N_{D,e}$. 
Since $\frac{1}{s}$ is positive real, a sufficient condition for $\bar N_{\mathrm{ln},e}(s)$ to be passive is $\bar J_e + \bar D_{e}\succeq 0$, and is a simple semidefinite condition on $d_{e}$. 

Stacking over all lines gives the passivized stacked line map
\[
\bar N_{\mathrm{ln}}(s)=\frac{1}{s}\,(J_E+D_E),
\qquad
N_D(s)=\frac{1}{s}\,D_E.
\]

For each line, define $H_e:= \bar{J}_e\!+\!\bar{D}_e$, giving
\begin{equation}\label{eq:He_explicit}
    H_e\!=\!\begin{bmatrix}
c_{\theta_e} \!&\! s_{\theta_e} \!&\! \!-c_{\theta_e} \!&\! s_{\theta_e} \\
s_{\theta_e} \!&\! 2\frac{|v_i|}{|v_j|}\!-\!c_{\theta_e}\!+\!\bar d_{e} \!&\! \!-s_{\theta_e} \!&\! \!-c_{\theta_e} \\
\!-c_{\theta_e} \!&\! \!-s_{\theta_e} \!&\! c_{\theta_e} \!&\! \!-s_{\theta_e} \\
s_{\theta_e} \!&\! \!-c_{\theta_e} \!&\! \!-s_{\theta_e} \!&\! 2\frac{|v_j|}{|v_i|}\!-\!c_{\theta_e}\!+\!\bar d_{e}
\end{bmatrix}. 
\end{equation}

The definition in \eqref{eq:Sigma_def}--\eqref{eq:Delta_def} can be equivalently written as
\[
2\frac{|v_i|}{|v_j|}+\bar d_{e} = \Sigma+\Delta,\qquad
2\frac{|v_j|}{|v_i|}+\bar d_{e} = \Sigma-\Delta.
\]

To analyze the definiteness of $H_e$, we apply a congruence transformation $\tilde H_e := T^\top H_e T$ that transforms the basis into $\Sigma$ and $\Delta$, thereby block-diagonalizing the matrix and simplifying the stability check. 
We pick
\[T:=\begin{bmatrix}
1&0&1&0\\
0&1&0&1\\
1&0&-1&0\\
0&1&0&-1
\end{bmatrix}. \]
Applying the transformation to \eqref{eq:He_explicit} and simplifying yields
\begin{equation}\label{eq:THT_form}
\tilde H_e 
=\begin{bmatrix}
0 & 0 & 0 & 0\\
0 & 2\Sigma-4c_{\theta_e} & 4s_{\theta_e} & 2\Delta \\
0 & 4s_{\theta_e} & 4c_{\theta_e} & 0 \\
0 & 2\Delta & 0 & 2\Sigma
\end{bmatrix}, 
\end{equation}
thus $H_e\succeq 0 \Leftrightarrow G_e\succeq 0 \Leftrightarrow \hat G_e:=\frac12 G_e\succeq 0$, where
\[
G_e\!:=\!
\begin{bmatrix}
2\Sigma-4c_{\theta_e} & 4s_{\theta_e} & 2\Delta \\
4s_{\theta_e} & 4c_{\theta_e} & 0 \\
2\Delta & 0 & 2\Sigma
\end{bmatrix}\!, \hat G_e
\!=\!
\begin{bmatrix}
\Sigma-2c_{\theta_e} \!&\! 2s_{\theta_e} \!&\! \Delta \\
2s_{\theta_e} \!&\! 2c_{\theta_e} \!&\! 0 \\
\Delta \!&\! 0 \!&\! \Sigma
\end{bmatrix}
\!\succeq \! 0. 
\]

We assume $c_{\theta_e}>0$. 
Then $\hat G_e\succeq 0$ is equivalent to the Schur complement of the middle block being PSD:
\[
S_e
\!\!:=\!\!
\begin{bmatrix}
\Sigma-2c_{\theta_e} \!&\! \Delta\\
\Delta \!&\! \Sigma
\end{bmatrix}
\!-\frac{1}{2c_{\theta_e}}\!
\begin{bmatrix}
2s_{\theta_e}\\ 0
\end{bmatrix}\!\!
\begin{bmatrix}
2s_{\theta_e} \!\!&\!\! 0
\end{bmatrix}\!=\!\begin{bmatrix}
\Sigma\!-\!2c_{\theta_e}\!\!-\!\dfrac{2s_{\theta_e}^2}{c_{\theta_e}} \!\!&\!\! \Delta\\
\Delta \!\!&\!\! \Sigma
\end{bmatrix},
\]
it is PSD if and only if its diagonals and determinant are nonnegative. 

We substitute $\Sigma$ and $\Delta$ into \eqref{eq:condition_line_1}, thus the condition ensures 
\begin{equation} \label{eq:psd_condition_se}
    -\Delta^2 \!+\! \Sigma^2 \!-\! \dfrac{2 \Sigma}{c_{\theta_e}}
        \!=\! \Sigma^2\!-\!2c_{\theta_e}\Sigma\!-\!\dfrac{2s_{\theta_e}^2}{c_{\theta_e}}\Sigma\!-\!\Delta^2 \!=\! \mathrm{det}(S_e) \geq 0. 
\end{equation}

Moreover, \eqref{eq:psd_condition_se} guarantees $\Sigma^2 - \frac{2}{c_{\theta_e}}\Sigma \ge \Delta^2 \ge 0$, i.e., $\Sigma\!\left(\Sigma-\frac{2}{c_{\theta_e}}\right)\ge 0$. 
\vspace{-1pt}
Since $\Sigma \ge 0$, this implies $\Sigma \ge \frac{2}{c_{\theta_e}}$, i.e., \eqref{eq:psd_condition_se} implies the smallest diagonal element in $S_e$ is nonnegative. 
Thus the condition ensures $S_e$ is PSD, implying $\hat G_e$, $G_e$, $H_e$ are PSD, further implying passivity of loop-shifted lines. 

\subsubsection{Concluding internal stability}

Given that the condition ensures the strict passivity of the loop shifted IBRs and the passivity of the loop shifted lines, the loop-shifted system $\hat G(s)\#_M \hat N(s)$ in Fig.~\ref{fig:our_loop_shifted_loop_color} is internally stable. 
Since $sG_0(s)$ is stable and there is no RHP pole-zero cancellation between $sG_0(s)$ and $\frac{1}{s}J_{E}$, we apply Lemma~\ref{lem:loop_transformation_equiv} and conclude that system $sG_0(s)\#_M \frac{1}{s}J_{E}$ is also internally stable, i.e., signals $(\Delta P, \Delta Q, \Delta \omega, \Delta \dot{\ln |v|})\to 0$ as $t\to \infty$. 
To further show that $\Delta \ln |v|\to 0$ as $t\to \infty$, which can be implied by the internal stability of the original system $G(s)\#_M N_E(s)$ in Fig.~\ref{fig:our_original_loop}, we adapt the proof method from \cite{haberle2025decentralized} and state the following. 
\begin{lemma}\label{lemma:fvt}
    Consider system $G(s)\#_M N_E(s)$ in Fig.~\ref{fig:our_original_loop}, and system $\hat G(s) \#_M \hat N(s)$ in Fig.~\ref{fig:our_loop_shifted_loop_color}. 
    Then, internal feedback stability of the minimal realization of all four closed-loop transfer functions of the loop-shifted system $\hat G(s)\#_M\hat N(s)$ implies internal feedback stability of $G(s)\#_M N_{E}(s)$.
\end{lemma}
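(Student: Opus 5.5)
We plan to express every closed-loop map of the original interconnection $G\#_M N_E$ as a product of closed-loop maps of the loop-shifted system $\hat G\#_M\hat N$ with fixed transfer matrices whose only possible unstable pole is the integrator $1/s$ in the voltage or angle channels. We then show that every such pole is cancelled by a zero at $s=0$ that the shifted closed loop provides. The idea follows \cite{haberle2025decentralized}: the shifted loop controls rates ($\Delta\omega$, $\Delta\dot{\ln|v|}$), and we must recover the levels ($\Delta\ln|v|$) together with the power signals.

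\textbf{Step 1 (algebraic relations).} We write $W:=W_{2n}(s)$ and $W_E:=W_{4m}(s)$, and use $M^\top W = W_E M^\top$, which holds because $M$ is a selection matrix acting channel-wise. Then $N_E M^\top G = J_E W_E M^\top G = \tfrac1s J_E M^\top (sW G)$. For the voltage channels $sWG$ equals $sG$, and for the frequency channels it equals $G$. This gives $sWG = sG_0 = G'$, where $G_0$ is $G$ with an extra $1/s$ on the frequency channels. The original loop-gain products, e.g. $(I+GMN_EM^\top)^{-1}G$, can therefore be rewritten using $G'$ and $\tfrac1s J_E$. Undoing the loop shift exactly as in Lemma~\ref{lem:loop_transformation_equiv}, $G'$ becomes $\hat G = G'(I-\tfrac1sK G')^{-1}$ and $\tfrac1s J_E$ becomes $\hat N=\tfrac1s(J_E+D_E)$. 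Each of the four blocks of $G\#_M N_E$ is then a finite sum of products of the form $A(s)\,[\text{block of }\hat G\#_M\hat N]\,B(s)$. Here $A$ and $B$ are built from $W^{-1}$ or $W$, $M$, $M^\top$, $K$, and $(I-\tfrac1sKG')$. For example,
\[
\begin{bmatrix}\Delta\omega\\ \Delta\ln|v|\end{bmatrix}
= \mathrm{diag}(1,\tfrac1s,\dots)\begin{bmatrix}\Delta\omega\\ \Delta\dot{\ln|v|}\end{bmatrix},
\]
and the power outputs are $M\hat N M^\top$ applied to the shifted state minus the $\tfrac1s D_E$ correction.

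\textbf{Step 2 (stability away from $s=0$).} By the hypothesis and Definition~\ref{def:internal_stability}, all shifted closed-loop blocks (in minimal realization) are stable. The factors $A$ and $B$ are proper with poles only at $s=0$, since $\tau_{v,i}s+1-\kappa_ik_{q,i}=\tau_{v,i}s$ when $\kappa_i=1/k_{q,i}$. Hence every block of $G\#_M N_E$ is analytic in the closed right half-plane except possibly at $s=0$. Properness is preserved because $\det(I+H_2(\infty)H_1(\infty))\neq0$ here, both $G$ and $N_E$ being strictly proper or bounded at infinity.

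\textbf{Step 3 (removing $s=0$), the main obstacle.} We need to show that each offending $1/s$ is matched by a zero at the origin of the adjacent shifted closed-loop block. We would first attempt a direct evaluation: multiply each block of $G\#_M N_E$ by $s$ and show that the limit $s\to0$ vanishes. This is the final-value-theorem argument used in \cite{haberle2025decentralized}. Concretely, we would use that $G'(0)$ vanishes on the voltage channels (it is $k_{q,i}s/(\tau_{v,i}s+1)$), so the loop $\hat G$ has a zero at $s=0$ in the voltage channel. This zero cancels the $1/s$ introduced when passing from $\Delta\dot{\ln|v|}$ to $\Delta\ln|v|$. For the angle channel the signal $\Delta\omega$ itself is output, so no integration is needed. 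Instead we must check that the $1/s$ in $N_E$ appears only in the combination $J_E W_E M^\top$ applied to frequency signals. These terms are balanced by the closed-loop factor $(I+\cdot)^{-1}$ having a zero at $0$ in those directions, because the $1/s$ in the loop gain forces $(I+L(s))^{-1}\to0$ on that subspace.

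The delicate part is the nullspace of $J_E$ at $s=0$. The common angle mode, $\mathbf 1$ on the angle channels, lies in $\ker J_E M^\top$. There the loop gain has no integrator and no cancellation is available. We plan to show that this direction is not excited in the listed outputs: only $\Delta\omega$, and not $\Delta\theta$, is an output of $G$, and $M J_E M^\top$ annihilates it. Consequently the corresponding residue is exactly zero. Once all residues at $s=0$ are shown to vanish, each block of $G\#_M N_E$ is stable, which proves the lemma.
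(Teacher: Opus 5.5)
Steps 1--2 are fine in substance. The lemma, however, lives entirely in Step 3, and the cancellation mechanism you name there does not exist. With $\kappa_ik_{q,i}=1$ (the identity you use in Step 2), the voltage channel of $\hat G$ is $\frac{k_{q,i}s}{\tau_{v,i}s+1-\kappa_ik_{q,i}}=k_{q,i}/\tau_{v,i}$, a nonzero constant. The zero of $G'$ at $s=0$ is cancelled by the pole of $\Lambda^{-1}$, where $\Lambda:=I-\frac1sKG'=\mathrm{diag}\bigl(1,\frac{\tau_{v,i}s}{\tau_{v,i}s+1},\dots\bigr)$. So once you express things through $\hat G$, no zero is left to absorb the rate-to-level factor $\frac1s$.

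Concretely, set $N:=MN_EM^\top$ and $B:=M(J_E+D_E)M^\top$, and read $\hat N$ at bus level as $\frac1sB$. Then $I+NG=(I+\hat N\hat G)\Lambda$, hence $(I+GN)^{-1}G=\mathrm{diag}(1,\frac1s,\dots)\,\hat G S'$ with $S':=(I+\hat N\hat G)^{-1}$. Since $\hat G(0)$ is invertible, the residue at $0$ is governed by the voltage rows of $S'(0)$. The relation $(sI+B\hat G(s))S'(s)=sI$ only tells you that the columns of $S'(0)$ lie in $\hat G(0)^{-1}\ker B$. What has to be proved is therefore that $\ker B$ contains no vector with a nonzero voltage entry, and nothing in your Step 3 touches this.

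The paper follows essentially your route: Lemma~\ref{lem:loop_transformation_equiv} to pass to $sG_0\#_M\frac1sJ_E$, the identity $(I+G'\frac1sJ_E)^{-1}G'=\mathrm{diag}(1,s,\dots)(I+GN_E)^{-1}G$, then a limit at $s=0$. It likewise credits a zero block, that of $\bar G=G^\pi(0)$. But it obtains that block by replacing $G^\pi(s)$ with $\bar G$ inside $(sI+G^\pi K^\pi)^{-1}$, which discards voltage-row terms of order $s$, the same order as $sI$. So imitating that computation will not close the gap.

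\textbf{The missing fact.} You need $\ker B=\mathrm{span}\{\ell\}$, where $\ell$ is the common-angle vector ($1$ on angle channels, $0$ on voltage channels). This holds when the graph is connected and $\hat G_e\succ0$ on every line, i.e.\ \eqref{eq:condition_line_1} holds strictly, since then $\ker(J_e+D_e)=\mathrm{span}\{(1,0,1,0)^\top\}$. It is not a consequence of internal stability of $\hat G\#_M\hat N$. Indeed, since $K=\mathrm{diag}(0,1/k_{q,1},\dots,0,1/k_{q,n})$, the equation $Bx=0$ with $x=(\Delta\theta,\Delta\ln|v|)$ is exactly the equilibrium condition of the original loop ($\Delta P=0$, $\Delta\ln|v_i|=-k_{q,i}\Delta Q_i$).

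At equality in \eqref{eq:condition_line_1} (e.g.\ two buses with $k_{q,1}=k_{q,2}$), the null vector of $\hat G_e$ has a nonzero common-voltage entry. Differentiating $(sI+B\hat G)S'=sI$ at $0$ gives $L^\top S'(0)=L^\top$ for a basis $L$ of $\ker B$. Together with the column constraint, this makes the range of $S'(0)$ all of $\hat G(0)^{-1}\ker B$, so $S'(0)$ has nonzero voltage rows and $G\#_MN_E$ has a pole at $0$. Meanwhile the modes in $\ker B$ are unobservable in $\hat G\#_M\hat N$. So the conclusion genuinely needs this nondegeneracy.

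\textbf{A short finish under the nondegeneracy.} You need neither four blocks nor separate treatment of the angle integrator or the common mode. $G$ is stable, so by Lemma~\ref{lemma:zhou_55} only $(I+NG)^{-1}N$ matters. Using $MJ_EM^\top W_{2n}=\hat N\,\mathrm{diag}(1,s,\dots)-K$,
\[
(I+NG)^{-1}N=\Lambda^{-1}\bigl[T'\,\mathrm{diag}(1,s,\dots)-S'K\bigr],\qquad T':=S'\hat N,
\]
whose only possible pole is at $0$ in the voltage rows. From $B\hat G(0)S'(0)=0$ and $B\hat G(0)T'(0)=B$ you get $S'(0)=r\alpha^\top$ and $T'(0)=\hat G(0)^{-1}+r\beta^\top$, with $r:=\hat G(0)^{-1}\ell$, which has no voltage entries. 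Hence the voltage rows of $T'(0)\,\mathrm{diag}(1,0,\dots)-S'(0)K$ vanish and the residue is zero.
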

\begin{proof}
    See Appendix \ref{appendix:fvt}. 
\end{proof}


\section{Numerical Example} \label{sec:num_eg}
We consider a two-bus system composed of two GFM inverters interconnected by a single transmission line. 
\rac{To validate the proposed stability certificates, we adopt a third-order nonlinear GFM model, as detailed in Appendix~\ref{appendix:gfm}, whose linearization yields \eqref{eq:ibr_model}. 
We begin by linearizing the nonlinear dynamics around a stable equilibrium point obtained with the following parameter settings: $p^{*}_1 = -4$ p.u., $p^{*}_2 = 4$ p.u., $q^{*}_1 = 1.2$ p.u., $q^{*}_2 = 0.5$ p.u., $k_{p,1} = k_{p,2} = 1.8$ p.u., $k_{q,1} = k_{q,2} = 0.3$ p.u., and $\omega_b = 120\pi$ rad/s. 
These parameters are selected to represent an operating condition characterized by voltage mismatch and line loading, as summarized below.}
\begin{itemize}
\item Grid State: $|v_1| = 1.0218$ p.u., $|v_2| = 0.9919$ p.u., and phase angle $\theta_{1} = 0^\circ$, $\theta_{2} = 23.2448^\circ$.
\item Line Parameter: $b_{e} = 10.0$ p.u.
\item Control Gains: $k_{q,1} = k_{q,2} = 0.3$. 
\end{itemize}
From the above, $\cos\theta = 0.9188$, $\frac{Q_{ij}-Q_{ji}}{b_e|v_1||v_2|} = 0.0593$. 
One can verify that, with $\kappa_i = \sum_{e:i \in e} d_e = \frac{1}{k_{q,i}}$, condition is satisfied. 

\begin{figure}
    \centering
    \includegraphics[width=0.75\linewidth]{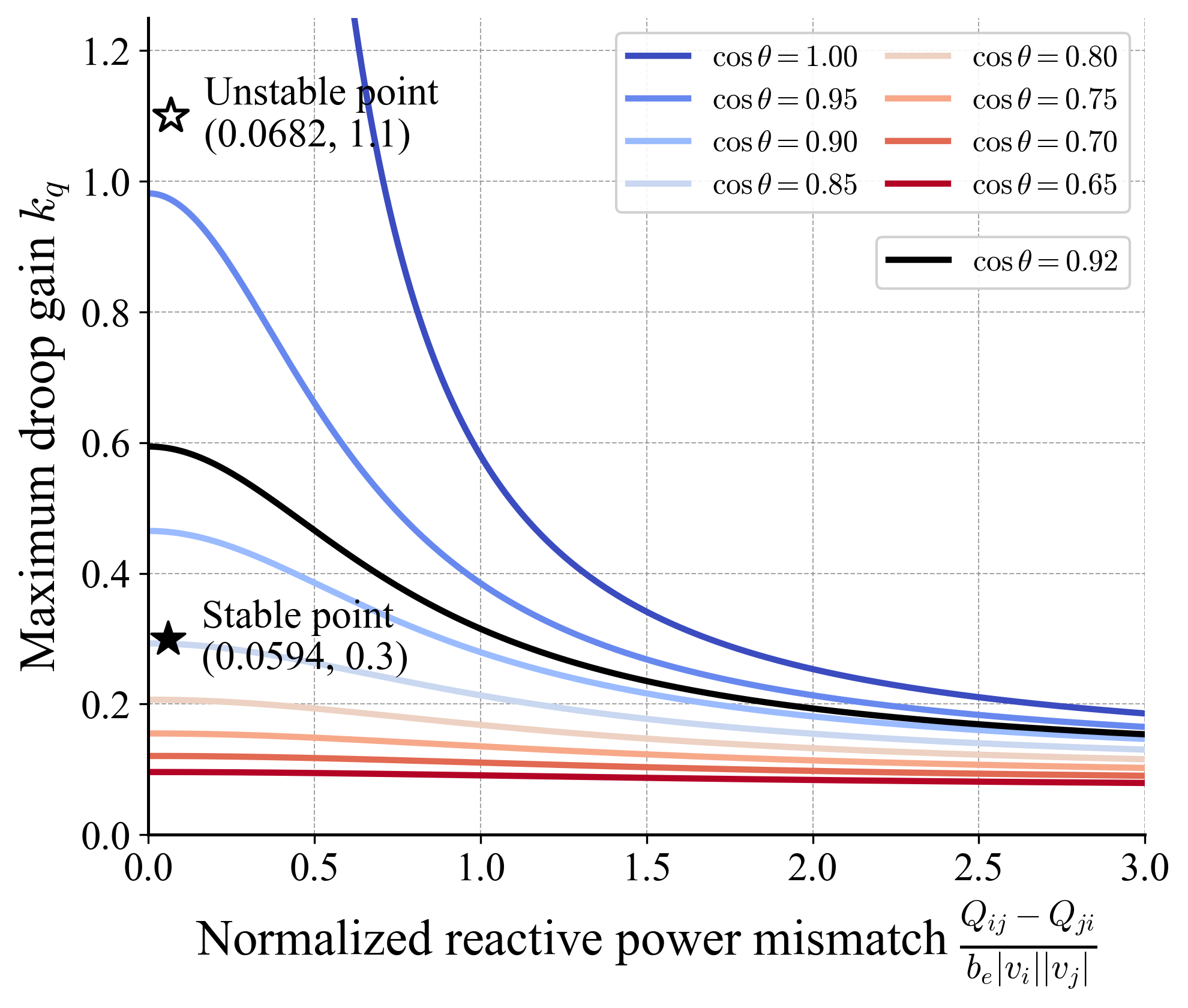}
    \caption{$\cos\theta$ level sets illustrating how maximum $k_q$ varies with $\frac{Q_{ij}-Q_{ji}}{b_e|v_i||v_j|}$. \textcolor{black}{The region below each curve is certified stable.} }
    \label{fig:mismatch_vs_max_gain}
\end{figure}

\begin{figure}
    \centering
    \includegraphics[width=1\linewidth]{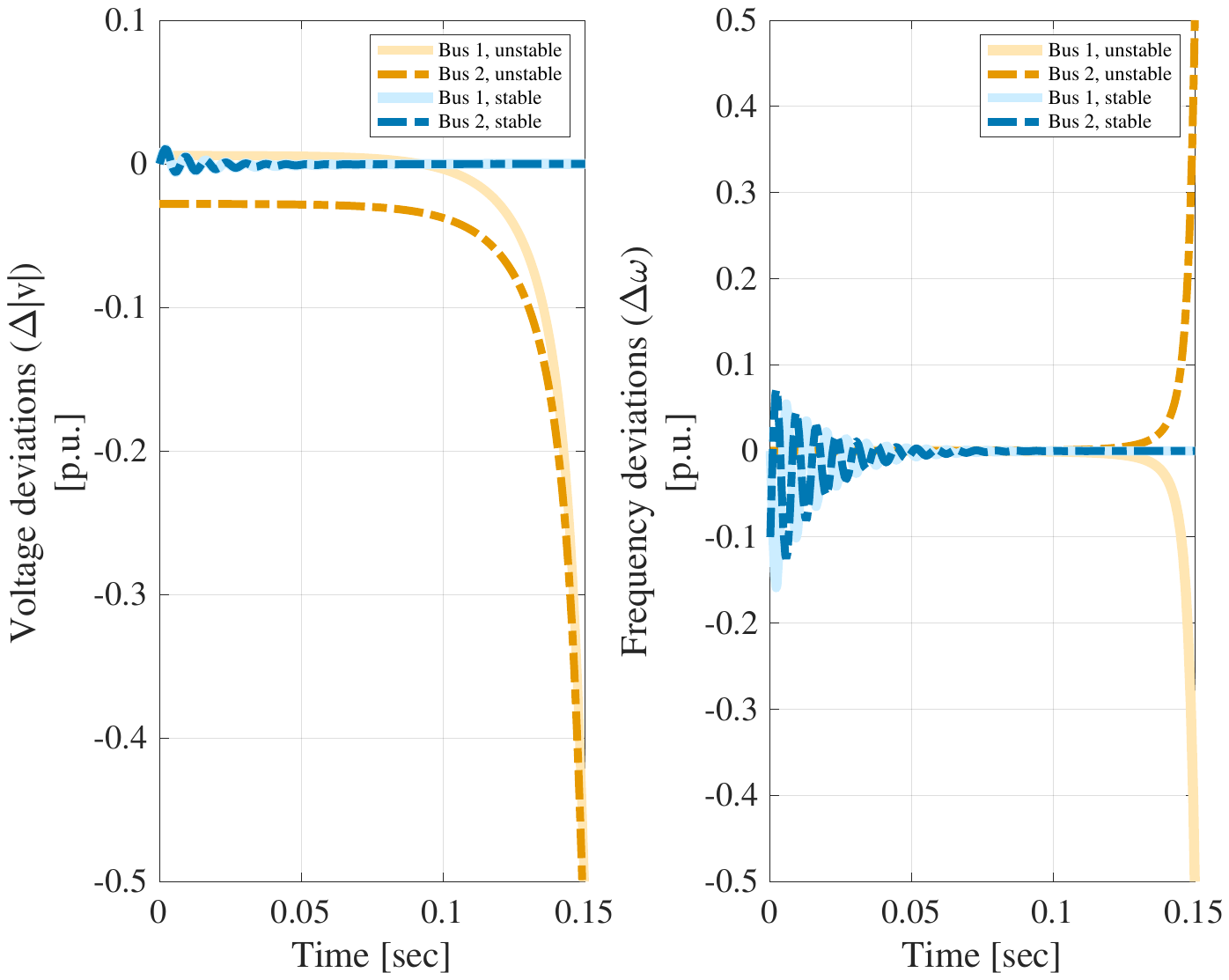}
    \caption{Dynamic response of the 2-bus system considering one stable and one unstable scenario.}
    \label{fig:two-bus-time}
\end{figure}

In Fig.~\ref{fig:mismatch_vs_max_gain}, we plot the maximum admissible droop gain to maintain stability versus the $\frac{Q_{ij}-Q_{ji}}{b_{e} |v_i| |v_j|}$. 
The region below each curve represents the stable operating set for a given $\cos\theta$. 
The varying curves represent different $\cos\theta$ values. 
We observe a clear inverse relationship between the reactive power mismatch and the stability margin. 
As the mismatch increases, 
the maximum allowable droop gain decreases monotonically. 
Furthermore, the system stability margin degrades significantly with smaller $\cos\theta$.
This indicates that operating points with high reactive loading and large voltage disparity require much more conservative controller tuning to ensure stability. 
To validate our numerical example, in Fig.~\ref{fig:mismatch_vs_max_gain}, we highlight the stability boundary for $\cos\theta = 0.92$ and the operating point from the example with a ``$\filledstar$''. 
Since the operating point lies within the safe region, the system is confirmed to be stable. 

\rac{Furthermore, we examine an alternative operating point obtained by setting $k_{q,1} = 1.1$ p.u., and $k_{q,2} = 1$ p.u., while keeping all other parameters unchanged.}
This operating point is marked with a ``$\hollowstar$'' in Fig.~\ref{fig:mismatch_vs_max_gain}, and violates the stability conditions established in Theorem~\ref{thm:main_result}, thereby demonstrating the ability of the proposed certificates to correctly characterize its stability properties. 
Fig.~\ref{fig:two-bus-time} presents time-domain simulations that corroborate the theoretical analysis for both the stable and unstable operating points. 
The two panels depict the voltage deviations and the frequency synchronization dynamics, respectively. 
This confirms the sufficiency of the proposed certificate to guarantee dissipation of oscillation energy.

\section{Conclusion} \label{sec:conclusion}
In this paper, we present a decentralized method for certifying the small-signal stability of IBR-dominated grids, with a specific focus on the impact of reactive power flows. 
We demonstrate that the network's operating point is an active factor in determining local passivity. 
By passivizing both devices and lines using loop transformation, we derive stability criteria that are both rigorous and practically implementable. 
These findings offer a clearer path towards ``plug-and-play'' integration of IBRs, ensuring robust operation even as grid complexity and interactions increase.

\section{AI Usage Disclosure}

The authors used AI tools to assist with narrative polishing, including spell-checking, visualization coding and debugging. None of the narrative was directly produced by AI. All models and ideas are the sole intellectual property of the authors. 

\bibliographystyle{ieeetr}
\bibliography{references}

\appendix

\subsection{Proof of Theorem \ref{thm:passivity_stability}}\label{appendix:thm1}
The internal stability of the feedback system in Fig.~\ref{fig:before_loop_transformation_012026} is characterized by the following:
\begin{equation}\label{eq:gang_of_four_app}
    \begin{bmatrix}
        (I+H_2 H_1)^{-1} & (I+H_2 H_1)^{-1}H_2 \\
        H_1(I+H_2 H_1)^{-1} & H_1(I+H_2 H_1)^{-1}H_2
    \end{bmatrix}. 
\end{equation}

Given the strict passivity of $H_2$, it suffices to only check the stability of $H_1(I+H_2 H_1)^{-1}$ \cite{zhou1996robust}, which is equivalent to $-1\notin \sigma(H_2 H_1)$.\footnote{Here $\sigma(A)$ represents the set of eigenvalues of $A$.} 
Suppose $-1\in \sigma(H_2 H_1)$, let $v\neq \vec{0}$ be the corresponding eigenvector. 
Thus $-v=H_2 H_1 v$. 
It follows that
\begin{equation}\label{eq:pass_proof_1}
    -v^* H_1^* v = v^* H_1^* H_2 H_1 v. 
\end{equation}
We take the conjugate transpose of both sides, it yields
\begin{equation}\label{eq:pass_proof_2}
    -v^* H_1 v = v^* H_1^* H_2^* H_1 v. 
\end{equation}
Summing both sides of \eqref{eq:pass_proof_1} and \eqref{eq:pass_proof_2} gives
\begin{equation}\label{eq:pass_proof_3}
    -v^* (H_1 + H_1^*) v = v^* H_1^* (H_2+H_2^*) H_1 v. 
\end{equation}

By the passivity of $H_1$ and strict passivity of $H_2$, the left hand side of \eqref{eq:pass_proof_3} is nonpositive, while the right hand side is strictly positive for $v\neq \vec{0}$. 
Thus there is a contradiction, indicating that $-1\notin \sigma(H_2 H_1)$ for every finite frequency under consideration. 

Moreover, since $H_1$ and $H_2$ are passive transfer matrices, they are proper, and hence the limits 
\begin{equation}
    H_1(j\infty)
    :=
    \lim_{\omega\to\infty} H_1(j\omega),
    \quad
    H_2(j\infty)
    :=
    \lim_{\omega\to\infty} H_2(j\omega)
\end{equation}
exist and are finite.
Let $\lambda\in\sigma\!\left(H_2(j\infty)H_1(j\infty)\right)$. 
Then by condition ii, 
\begin{equation}
    |\lambda|\leq \bar{\sigma}\!\left(H_2(j\infty)H_1(j\infty)\right) \leq \bar{\sigma}\!\left(H_2(j\infty)\right) \bar{\sigma}\!\left(H_1(j\infty)\right) < 1. 
\end{equation}
Therefore every eigenvalue of $H_2(j\infty)H_1(j\infty)$ has magnitude strictly less than one. 
In particular, $-1\notin\sigma\!\left(H_2(j\infty)H_1(j\infty)\right)$. 

Therefore, the interconnection is internally stable given the conditions in Theorem \ref{thm:passivity_stability}. 
\subsection{Proof of Lemma~\ref{lem:loop_transformation_equiv}} \label{appendix:cor21}
Define 
\begin{equation}
    T(s) \!:=\! \bigl(I \!+\! H_2(s)H_1(s)\bigr)^{-1}\!H_2(s), 
\end{equation}
\begin{equation}
    T'(s) \!:=\! \bigl(I \!+\! H_2'(s)H_1'(s)\bigr)^{-1}\!H_2'(s). 
\end{equation}
First suppose $H_1(s) \# H_2(s)$ is internally stable. 
Then $T(s)$ is stable. 
We show $T'(s)$ is also stable. 
\[
\begin{aligned}
T'(s)
=& \bigl(I - \Gamma(s)H_1(s)\bigr)\bigl(I + H_2(s)H_1(s)\bigr)^{-1}\bigl(H_2(s) + \Gamma(s)\bigr) \\
=& \bigl(I - \Gamma(s)H_1(s)\bigr)\bigl(T(s) + \Gamma(s) - T(s)H_1(s)\Gamma(s)\bigr).
\end{aligned}
\]
Since $H_1(s)$, $T(s)$, $\Gamma(s)$ are stable, the right-hand side is a finite sum and product of stable transfer matrices. 
Hence $T'(s)$ is stable. 
Since $H_1'(s)$ is stable, Lemma~\ref{lemma:zhou_55} implies that $H_1'(s) \# H_2'(s)$ is internally stable.

Conversely, suppose that $H_1'(s) \# H_2'(s)$ is internally stable. Then $T'(s)$ is stable. The transformation is invertible, with
\[
H_1(s) = H_1'(s)\bigl(I + \Gamma(s)H_1'(s)\bigr)^{-1}, \quad H_2(s) = H_2'(s) - \Gamma(s).
\]
Applying the same argument with $\Gamma(s)$ replaced by $-\Gamma(s)$, 
\[
T(s) = \left(I + \Gamma(s)H_1'(s)\right)\left(T'(s) - \Gamma(s) + T'(s)H_1'(s)\Gamma(s)\right).
\]
Since $H_1'(s)$, $T'(s)$, $\Gamma(s)$ are stable, $T(s)$ is stable. Since $H_1(s)$ is stable, Lemma~\ref{lemma:zhou_55} implies that $H_1(s) \# H_2(s)$ is internally stable.

\subsection{Proof of Lemma \ref{lemma:fvt}}\label{appendix:fvt}
Since $sG_0(s)$ is stable and there is no RHP pole-zero cancellation between $sG_0(s)$ and $\frac{1}{s}J_{E}$, we apply Lemma~\ref{lem:loop_transformation_equiv} and conclude that system $sG_0(s)\#_M \frac{1}{s}J_{E}$ is also internally stable. 
Moreover, notice that 
\begin{equation}
    \left(I+G'\frac{1}{s}J_{E}\right)^{-1}G' = \mathrm{diag}(1,s,\cdots,1,s)\cdot \left(I+GN_{E}\right)^{-1}G, 
\end{equation}
thus we can conclude the stability of $\left(I+GN_{E}\right)^{-1}G$ if the step response of the voltage derivatives converges to zero, i.e.,
\begin{equation}\label{eq:limit}
    \lim_{s\to 0} \!s\frac{1}{s}\left(\!I\!+\!G'\!\frac{1}{s}J_{E}\!\right)^{\!-1}\!\!\!G' = I\otimes \begin{bmatrix}
        * \!&\! 0 \\
        0 \!&\! 0
    \end{bmatrix}. 
\end{equation}

To show \eqref{eq:limit}, define permutation matrix $\mathcal{P} = \begin{bmatrix} \mathcal{P}_{u}^\top & \mathcal{P}_{l}^\top \end{bmatrix}^\top$: 
\begin{equation*}
    (\mathcal{P}_{u})_{i,j} = \begin{cases} 1 & \text{if } j = 2i-1 \\ 0 & \text{otherwise} \end{cases}, \quad (\mathcal{P}_{l})_{i,j} = \begin{cases} 1 & \text{if } j = 2i \\ 0 & \text{otherwise} \end{cases}. 
\end{equation*}

We apply the permutation matrix $\mathcal{P}$ to the limit, 
\begin{equation}\label{eq:limit_permutated}
    \lim_{s\to 0} \mathcal{P}\left(I+G'\frac{1}{s}J_{E}\right)^{-1}G'\mathcal{P}^{\top} = \mathcal{P}I\otimes \begin{bmatrix}
        * & 0 \\
        0 & 0
    \end{bmatrix}\mathcal{P}^{\top}=\begin{bmatrix}
        * & 0 \\
        0 & 0
    \end{bmatrix}. 
\end{equation}

Due to orthogonality of $\mathcal{P}$, the LHS of \eqref{eq:limit_permutated} is equivalent to 
\begin{equation}\label{eq:limit_pi}
    \lim_{s\to 0} \left(I+G^{\pi}N^{\pi}\right)^{-1}G^{\pi}, 
\end{equation}
where 
\begin{equation}
    G^{\pi} \!\!=\!\! \mathcal{P} G' \mathcal{P}^{\!\top}\!\!\!=\!\!\!\begin{bmatrix}
        \mathrm{diag}\!\left(\!\frac{k_{p,i}}{\tau_{\omega,i}s+1} \!\right) \!&\! 0 \\
        0 \!&\! \mathrm{diag}\!\left(\!\frac{k_{q,i}s}{\tau_{v,i}s+1} \!\right)
    \end{bmatrix}\!\!, 
    N^{\pi} \!\!=\!\! \frac{1}{s}\mathcal{P} J_{E} \mathcal{P}^{\top}\!\!. 
\end{equation}

Notice that $N^{\pi}$ contains a $1/s$ factor. 
Define $K^{\pi} = s N^{\pi}$ and take the limits:
\begin{equation}
    \lim_{s\to 0}G^{\pi} \!\!=\!\!\begin{bmatrix}
        \mathrm{diag}(k_{p,i}) \!&\! 0 \\
        0 \!&\! 0
    \end{bmatrix} \!:=\! \bar{G}, \,
    \lim_{s\to 0} s N^{\pi} \!\!=\!\! \begin{bmatrix}
        K_{11} \!&\! K_{12} \\
        K_{21} \!&\! K_{22}
    \end{bmatrix} \!:=\! K^{\pi}. 
\end{equation}
Thus, substituting $(I + G^\pi N^\pi)^{-1} = s(sI + G^\pi K^\pi)^{-1}$, \eqref{eq:limit_pi} becomes 
\begin{equation}
\begin{aligned}
    &\! \lim_{s \to 0} s \left(sI+\bar{G} K^\pi \right)^{-1}\bar{G} \\
    =&\! \lim_{s \to 0} s\!\begin{bmatrix}
        \!\bigl(\!sI\!+\!\mathrm{diag}(k_{p,i})K_{11}\!\bigr)^{-1} \!\!&\!\! \mathrm{diag}(k_{p,i})\!K_{12} \\
        0 \!\!&\!\! sI
    \end{bmatrix}^{-1}\!\!\begin{bmatrix}
        \mathrm{diag}(k_{p,i}) \!\!&\!\! 0 \\
        0 \!\!&\!\! 0
    \end{bmatrix}\\
    =&\! \lim_{s \to 0} s\!\begin{bmatrix}
        \!\bigl(sI+\mathrm{diag}(k_{p,i})K_{11}\bigr)^{-1} & \star \\
        0 & \frac{1}{s}I
    \end{bmatrix}\begin{bmatrix}
        \mathrm{diag}(k_{p,i}) \!\!&\!\! 0 \\
        0 \!\!&\!\! 0
    \end{bmatrix}\\
    =&\! \lim_{s \to 0} \begin{bmatrix}
        \!s\bigl(sI+\mathrm{diag}(k_{p,i})K_{11}\bigr)^{-1}\mathrm{diag}(k_{p,i}) & 0 \\
        0 & 0
    \end{bmatrix}
\end{aligned}
\end{equation}
which yields the required structure.

\subsection{GFM Modeling Equations}\label{appendix:gfm}
The nonlinear differential-algebraic equations describing the GFM dynamics are given below in per units. 
\subsubsection{Differential Equations}
\begin{align}
    \dot{\theta} &\ =\ \omega_b \delta \omega \label{eqn:sm1}
        \\
        \tau_{\omega}\dot{\delta \omega} &\ =\ -\delta \omega + (p^{*}-p)k_p \label{eqn:sm2}
        \\
        \tau_{v}\dot{\delta |v|} &\ =\ -\delta |v| + (q^{*}-q)k_q \label{eqn:sm3}
\end{align}
\subsubsection{Algebraic Equations}
\begin{align}
    \delta\omega &\ =\ \omega-\omega_0 \label{eqn:am1}
    \\
    \delta |v| &\ =\ |v|-|v_0| \label{eqn:am2}
    \\
    p &\ =\ v_\text{D}i_\text{D} + v_\text{Q}i_\text{Q} \label{eqn:am3}
    \\
    q &\ =\ v_\text{Q}i_\text{D} - v_\text{D}i_\text{Q} \label{eqn:am4}
    \\
    |v| &\ =\ \sqrt{v_\text{D}^2 + v_\text{Q}^2} \label{eqn:am5}
    \\
    v_\text{D}\sin\theta &\ =\ v_\text{Q}\cos\theta \label{eqn:am6}
\end{align}
Here, $\tau_{\omega}$ \& $\tau_{v}$ represent frequency \& voltage filter time constants, $k_{p}$ \& $k_{q}$ denote the frequency \& voltage droop gains, respectively. 
$|v_0|$, $\omega_0$ \& $\omega_b$ denote the nominal voltage, angular frequency setpoint \& base frequency of the inverter. $v_\text{D}$ \& $v_\text{Q}$ refers to global D- and Q-axis output voltage at PCC, while $i_\text{D}$ \& $i_\text{Q}$ refers to global D- and Q-axis output current at PCC, respectively. $\theta_c$ denotes the PCC angle w.r.t a global DQ-frame. 
The reference frames are illustrated in Fig.~\ref{fig:gfm_ph}. 
\begin{figure}[ht!]
    \centerline{\includegraphics[scale=0.18]{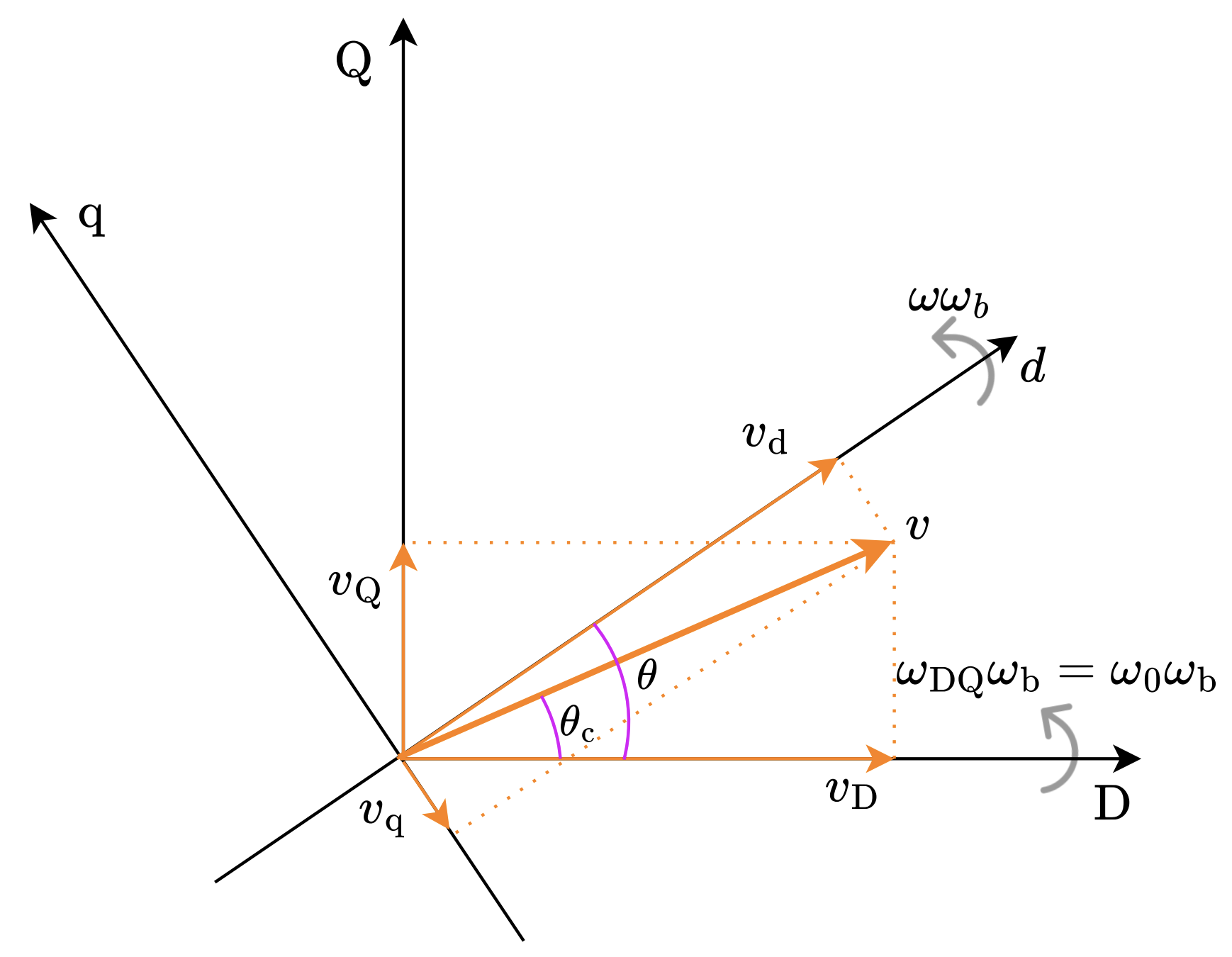}} 
    \caption{Phasor diagram and reference frames of the GFM IBR.}
    \label{fig:gfm_ph} 
\end{figure} 

\balance

\endgroup
\end{document}